\documentclass[5p,times,twocolumn]{elsarticle}

\usepackage[T1]{fontenc}
\usepackage{amsmath,amssymb}
\usepackage{algorithm}
\usepackage{algpseudocode}
\usepackage{graphicx}
\usepackage{booktabs}
\usepackage{xurl}
\usepackage[hidelinks,hypertexnames=false]{hyperref}

\newtheorem{theorem}{Theorem}
\newtheorem{proposition}{Proposition}
\newproof{proof}{Proof}

\journal{}
\date{12 October 2026}
\hypersetup{
  pdftitle={Hypergeometric Bridges for Multidimensional Grid Visibility},
  pdfauthor={Ibrahim Ibrahim and Wilm Decre}
}
\makeatletter
\def\ps@pprintTitle{%
  \let\@oddhead\@empty
  \let\@evenhead\@empty
  \def\@oddfoot{\hfil\thepage\hfil}%
  \let\@evenfoot\@oddfoot}
\makeatother
\biboptions{numbers,sort&compress}

\begin{document}
\begin{frontmatter}

\title{Hypergeometric Bridges for Multidimensional Grid Visibility}

\author[aff1]{Ibrahim Ibrahim\corref{cor1}\fnref{orcid1}}
\ead{ibrahim.ibrahim@kuleuven.be}
\author[aff1]{Wilm Decré\fnref{orcid2}}

\cortext[cor1]{Corresponding author}
\fntext[orcid1]{ORCID: 0000-0001-6840-558X}
\fntext[orcid2]{ORCID: 0000-0002-9724-8103}

\affiliation[aff1]{
	organization={Robotics, Automation and Mechatronics (RAM), Department of Mechanical Engineering,
		KU Leuven, and Flanders Make@KU Leuven},
	addressline={Celestijnenlaan 300, box 2420},
	city={Leuven},
	postcode={B-3001},
	country={Belgium}}

\begin{abstract}
	Established interpolation-based methods efficiently compute visibility from
	one source to every grid cell using values computed closer to the source.
	Interpolation yields fractional values even on binary grids of free or fully
	blocking cells. Two-dimensional path-counting theory interprets
	these values as fractions of unobstructed grid paths, but three-dimensional methods
	use different predecessor structures. Interpreting these values consistently
	across dimensions and on nonbinary grids remains unresolved.

	We provide a unified mathematical account that gives every
	visibility value a precise meaning in any dimension. We establish when grid
	refinement yields eventual agreement with continuous visibility, implement
	parallel solvers, and compare four three-dimensional methods in accuracy and
	throughput.

	We define cell transmission: zero blocks visibility, one preserves it, and
	intermediate values attenuate it proportionally. We prove each computed value
	equals the uniform \emph{finite-path average} of transmission products over
	finitely many source-to-cell grid paths. It is the
	fraction of clear paths on binary grids and the mean path transmission on
	nonbinary grids. For a source--target query, the
	fractions of paths passing through cells at each step form a
	hypergeometric distribution in 2D, which we call the \emph{hypergeometric
	bridge}. On the last step into the target, these probabilities equal the
	interpolation weights. Higher-dimensional interpolation weights arise from
	products of hypergeometric bridges.
	For any fixed dimension, the full-field algorithm remains linear in grid size.
	Exact path enumeration reproduces every checked visibility value. NumPy,
	OpenMP, and CUDA agree numerically, and our OpenMP solver achieves the highest
	throughput in the four-method CPU comparison at every tested resolution.
\end{abstract}

\begin{keyword}
multidimensional grid visibility \sep hypergeometric distributions \sep path counting \sep
multilinear interpolation \sep dynamic programming \sep parallel computing
\end{keyword}

\end{frontmatter}

\section{Introduction}
\label{sec:introduction}

At its simplest, continuous visibility asks whether an obstacle blocks the
straight line segment from a source point to a target point.  On a grid, many
applications need that answer at every target rather than at only one: a robot may evaluate
visibility at every candidate location, a sensor study may compare coverage
throughout a room, and an architect may examine visibility throughout a voxel
model of a building.  Casting an independent ray to every target is direct, but
the ray-by-ray calculation solves each query in isolation, whereas the
full-field approach studied here assigns one scalar visibility value to every
cell and computes all cell values together.
Figure~\ref{fig:visibility-3d-showcase} previews the computed full-volume
visibility field in two three-dimensional scenes.
Such a field should do more than save work: every propagated value should have
a precise meaning.

\begin{figure*}[tp]
	\centering
	\begin{minipage}[t]{0.48\textwidth}
		\centering
		\includegraphics[width=\linewidth]{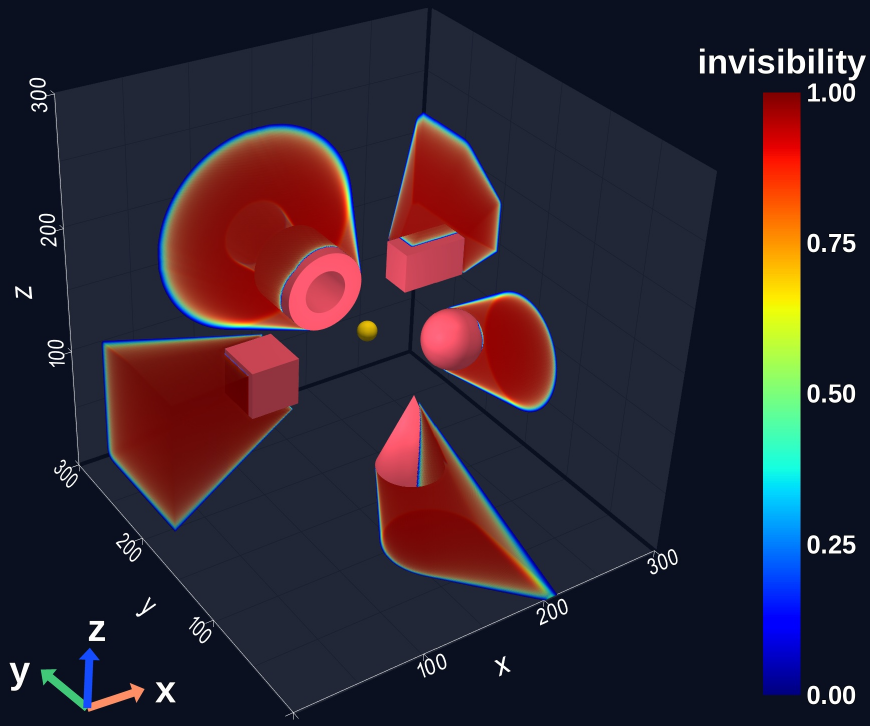}\\[-1mm]
		\small (a) Scene A: volumetric view
	\end{minipage}\hfill
	\begin{minipage}[t]{0.48\textwidth}
		\centering
		\includegraphics[width=\linewidth]{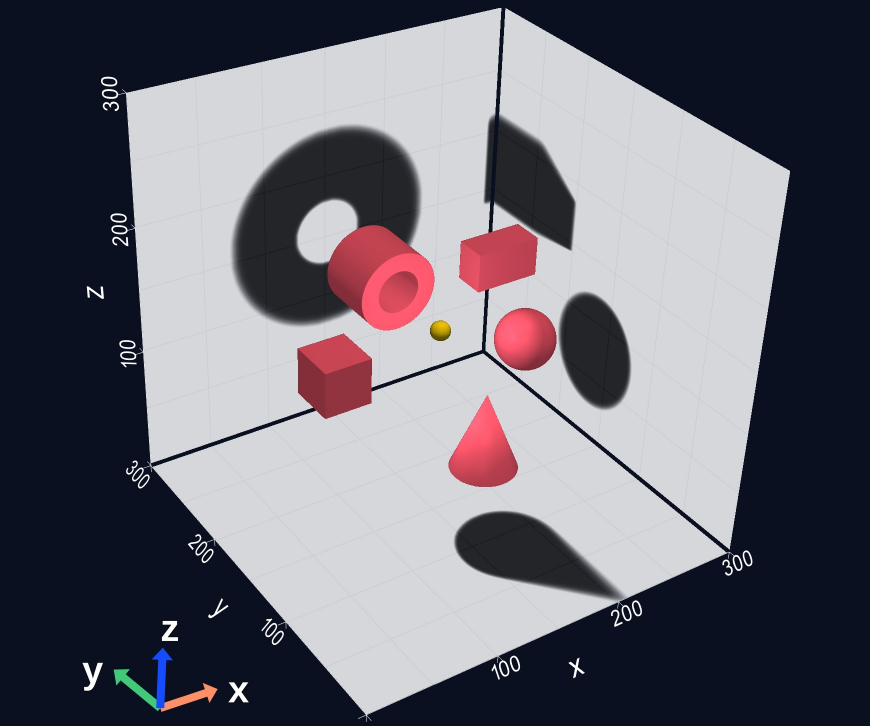}\\[-1mm]
		\small (b) Scene A: plane views
	\end{minipage}
	\vspace{1.5mm}

	\begin{minipage}[t]{0.48\textwidth}
		\centering
		\includegraphics[width=\linewidth]{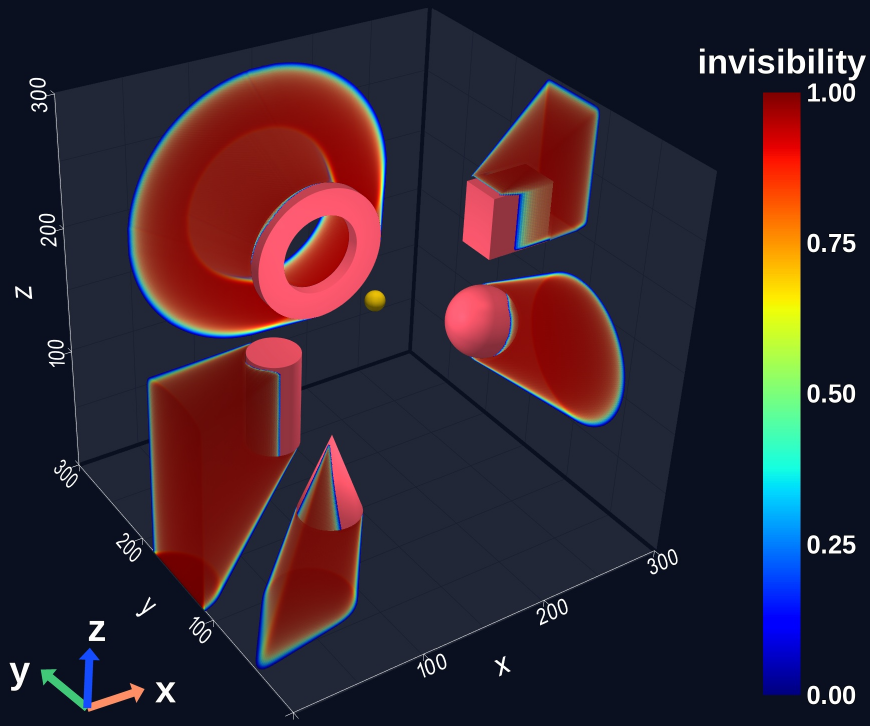}\\[-1mm]
		\small (c) Scene B: volumetric view
	\end{minipage}\hfill
	\begin{minipage}[t]{0.48\textwidth}
		\centering
		\includegraphics[width=\linewidth]{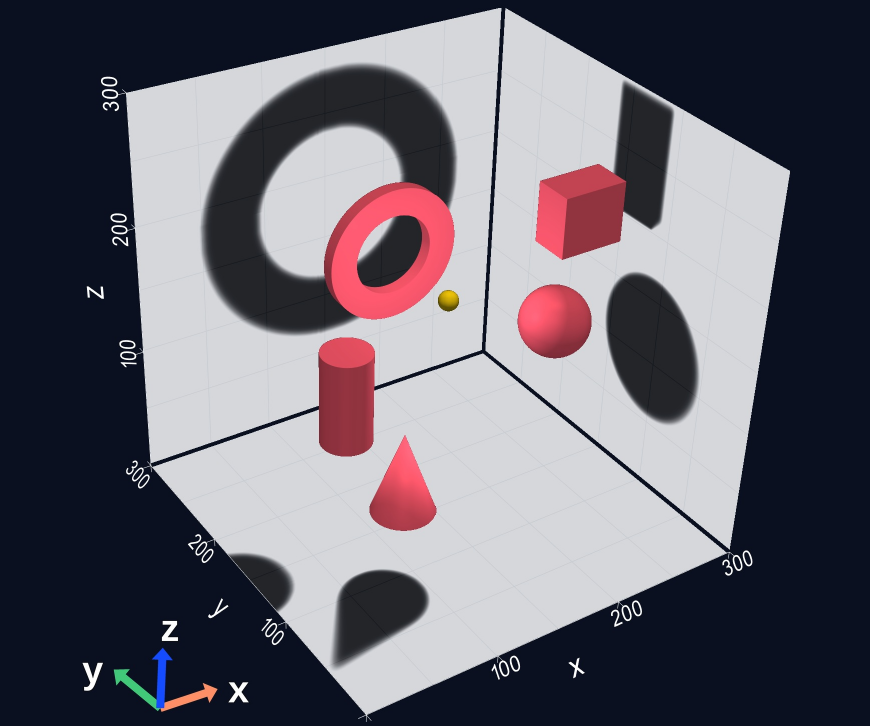}\\[-1mm]
		\small (d) Scene B: plane views
	\end{minipage}
	\caption{Illustrative outputs of the three-dimensional full-field computation
		studied here.  The yellow sphere marks the source, and the pink surfaces mark
		opaque obstacles.  The left column shows cells below an illustrative
		visibility threshold throughout each volume.  The right column shows the
		corresponding field on the floor, rear boundary, and right boundary.  For
		display only, the colour scale rescales values below the threshold so that
		zero denotes the threshold and one denotes a computed visibility value of
		zero.}
	\label{fig:visibility-3d-showcase}
\end{figure*}

A causal grid sweep constructs this field from the source outward.  Cells are
grouped into layers by their largest absolute coordinate displacement from the
source, with a coordinate attaining that maximum designated as dominant for the
cell.  Under this dominant-axis ordering, processing the layers outward ensures
that every cell value on the immediately preceding layer is available when the
sweep reaches a target.  For each target, the interpolation point is where the
ray from the source through that target crosses the immediately preceding
layer.  The intersection need not coincide with a cell centre, so linear
interpolation in two dimensions (2D), or bilinear interpolation in three
dimensions (3D),
combines nearby, already computed cell-centre values into the field value
carried to the target.

The target then applies its prescribed cell transmission as a scalar
multiplier: zero blocks the carried value, one preserves it, and any
transmission between zero and one attenuates it.  Together, the interpolation
step and the target-cell multiplication define the \emph{dominant-axis
	visibility recurrence}: beginning from an initial visibility value
assigned to the source cell, this recursive rule expresses each non-source
visibility value in terms of values on the preceding layer, while the causal
sweep evaluates those dependencies layer by layer from the source outward.
Even when every cell
transmission is binary---either zero or one---interpolation can produce a field
value strictly between those limits, whereas the original straight-line
question has only a yes-or-no answer.  The resulting scalar field is therefore
not self-interpreting: a value between zero and one does not by itself specify
either the fraction of the source--target segment that is clear or the
probability that the entire segment is clear.  The mathematical interpretation
assigned to that fractional value determines what can be claimed about
exactness, thresholds, and agreement with continuous visibility.  Interpreting a
fractional field value begins with a basic question: what exactly does the
dominant-axis interpolation average?

Several parts of the answer are established.  Tsai et al.\ formulated
visibility through a source-centred transport equation and showed how the
equation's radial ordering supports one-pass interpolation in two and three
dimensions
\cite{Tsai2004Visibility}.  Kao and Tsai later developed the dynamic-programming
and Hamilton--Jacobi analysis of source-centred radial visibility
\cite{KaoTsai2008Visibility}.  Together, the source-centred transport equation
and the dynamic-programming and Hamilton--Jacobi analyses establish why a
one-pass grid sweep can proceed causally from the source and how each local
interpolation is constructed.  Goldstein et al.\ later gave the 2D scalar
visibility value for a binary map a direct grid-path interpretation: in
SpaceAnalysis it is the fraction of shortest grid paths that remain
unobstructed as they advance one layer outward from the source at each step.
Their later
path-counting analysis proved that the dominant-axis visibility recurrence they
analysed computes the same fraction of unobstructed paths.  Goldstein et al.\ also expressed the
number of paths through an intermediate cell as the product of the number of
shortest grid paths from the source to the cell and the number of shortest
continuations from the cell to the target
\cite{Goldstein2020SpaceAnalysis,Goldstein2022PathCounting}.

Fisher-Gewirtzman et al.\ used voxel-based volumetric visibility for
map-wide 3D analysis in urban environments
\cite{FisherGewirtzman2013VoxelVisibility}.  Goldstein et al.\ later
incorporated 3D visibility into the broader Voxel-based Architectural Space
Analysis (VASA) system \cite{Goldstein2023VASA}.  Both
Fisher-Gewirtzman et al.'s urban visibility analysis and VASA treat a complete
visibility volume, rather than one source--target query, as the object of
analysis.

Tsai et al.'s source-centred transport equation and its causal radial ordering,
Goldstein et al.'s binary 2D interpretation in terms of unobstructed grid paths,
and the full-volume 3D
analyses of Fisher-Gewirtzman et al.\ and VASA establish essential pieces of
the answer.  What remains is one account that connects the source-centred radial
transport derivation and the established 2D path counts to an
endpoint-conditioned discrete distribution over the cells on each intermediate
layer.  For each cell on that layer, the distribution gives the probability
that a uniformly selected path passes through that cell.  Such an account would
keep the field interpretable when the recurrence extends to 3D, cell transmissions
become nonbinary, and computation scales to full volumes.  Without that
connection, ``exact,'' ``probabilistic,'' and even ``visibility'' can refer to
different mathematical objects.

This paper supplies that connection by identifying the grid paths averaged by
the recurrence and deriving the probability law governing how those paths pass
through each intermediate layer.  To interpret a stored visibility value,
treat an arbitrary non-source grid cell as the target.  We define an explicit
finite family of monotone grid paths
from the source to that target, which we call \emph{canonical paths}.  Each
canonical path advances without reversing any coordinate and contributes the
product of the prescribed cell transmissions it encounters.  We prove that
the scalar visibility value computed at the target by the dominant-axis
visibility recurrence is the \emph{finite-path average}: the value assigned to
the source cell multiplied by the uniform average of those path-transmission
products.  The finite family of canonical paths and its finite-path average
together form the \emph{finite-path model}.
Since the target is arbitrary, every non-source grid cell has its own
source-to-cell canonical path family, so the same interpretation applies
to every scalar visibility value in the full field.  When the visibility field
is initialized to one at the source cell, the average is the fraction
of canonical paths that remain clear for binary transmissions and the mean of
their transmission products for deterministic nonbinary transmissions.  The
causal sweep evaluates the recurrence as a dynamic program, computing the
average for every target without enumerating paths and requiring constant work
per grid cell in 2D and 3D.

To study a different quantity, now keep one source and one target fixed and
ask how the corresponding source-to-target canonical path family is
distributed before reaching the target.  On each 2D grid layer between the
endpoints, every path
passes through exactly one cell.  We call the probability that a uniformly
selected path passes through a particular cell the cell's
\emph{path-incidence probability}.  Unlike the scalar visibility value stored
at that cell, which averages transmission products over paths ending there,
the path-incidence probability concerns paths continuing to the fixed target.
On a given layer, the number of transverse, or nondominant, steps already
taken identifies the cell reached.  Because the fixed endpoints prescribe the
total number of transverse steps, that count follows a discrete
sampling-without-replacement distribution conditioned on both endpoints---the
\emph{hypergeometric bridge}.  The bridge thus describes how the fixed path
family is distributed across each layer.  These path-incidence probabilities
arise from uniform path selection, not map uncertainty about whether cells are
free or occupied.

In 3D, every ordered path step advances the dominant coordinate, and the
\emph{transverse-advance schedule} of each transverse coordinate is the subset
of those steps on which that coordinate also advances.
Selecting the two schedules independently and uniformly produces a uniformly
selected path from the 3D family.  Because the endpoint fixes the total number
of advances in each schedule, the cumulative advance count for each coordinate
follows a one-dimensional hypergeometric bridge, and the joint distribution of
the two counts is the product of those bridges---the \emph{3D product bridge}.
At the immediately preceding layer, the 3D product bridge gives the
probabilities of four target-entry cases, according to whether neither, either,
or both transverse coordinates advance on the final path step into the target.
These four target-entry probabilities are exactly the coefficients with which
the 3D bilinear interpolation combines the four surrounding cell-centre values
on the preceding grid face.  Because the dominant-axis recurrence equals the
uniform average of the transmission products contributed by the canonical
paths, evaluating every canonical path directly provides an independent check
of the recurrence.  Exact agreement between the resulting path average and the
recurrence remains distinct from agreement with continuous visibility.

Together, four contributions establish a theory--algorithm--\hspace{0pt}implementation
chain for the established dominant-axis recurrence: a source-relative
formulation, the finite-path model, explicit scalar and probability semantics,
and reproducible full-volume computation.

\begin{itemize}
	\item We derive the 2D and 3D visibility recurrence in source-relative
	      coordinates for every grid source position and source--target direction,
	      and prove that the resulting outward sweep is causal and that every
	      computed value remains in $[0,1]$.
	\item We prove that the recurrence computes the finite-path average over the
	      corresponding canonical path family.  We identify the 2D intermediate-cell
	      path-incidence law as the hypergeometric bridge and derive the 3D product
	      bridge whose target-entry probabilities equal the four bilinear weights.
	      We derive the multilinear recurrence and prove the path-average identity in
	      any fixed spatial dimension, with work linear in the number of returned grid
	      values.
	\item We distinguish the path-incidence probability induced by uniform selection
	      within the canonical path family from map uncertainty, and use the
	      finite-path average to interpret scalar visibility under deterministic binary
	      and nonbinary cell transmissions.  When the visibility field is initialized
	      to one at the source cell, classifying a binary-map target as visible at a
	      chosen threshold requires at least that fraction of the target's canonical
	      paths to remain clear.  We also establish the grid-refinement conditions
	      under which these thresholded classifications agree with continuous
	      visibility on every sufficiently fine grid for fixed queries whose segments
	      have positive obstacle clearance or enter an obstacle interior
	      (Theorem~\ref{thm:nongrazing-refinement-consistency}).
	\item We provide open-source GitHub code comprising exact-arithmetic evaluators
	      of the recurrence in 2D, 3D, and any fixed spatial dimension and
	      full-volume NumPy, OpenMP, and CUDA implementations of the four-parent
	      3D recurrence.  Exact
	      enumeration on tractable path families checks that the recurrence returns
	      the finite-path average, while numerical agreement checks that the three
	      full-volume implementations return the same visibility field up to the
	      reported floating-point differences.  We compare thresholded classifications
	      with continuous visibility under grid refinement and CPU throughput with
	      three 3D alternatives.  In the CPU comparison on the benchmark laptop, the
	      four-parent implementation achieves the highest throughput among all four
	      methods at every tested resolution.
\end{itemize}

This article extends our earlier conference paper
\cite{IbrahimEtAl2024Visibility}, which presented the two-parent 2D sweep and
applied it to heuristic path planning.  Relative to that work, the journal
article supplies the corrected source-relative derivation and complete tie
handling, establishes the sweep's finite-path and probabilistic interpretation,
and extends the model to 3D and any fixed spatial dimension.  It also states the
scalar-transmission limits and the conditions for agreement with continuous
visibility.  Beyond these theoretical extensions, the journal article adds
exact-arithmetic verification, full-volume CPU and GPU implementations, and
evaluation against three 3D alternatives.

Section~\ref{sec:related-work} positions this paper's four-parent dominant-axis
recurrence against prior work, and Section~\ref{sec:recurrence} derives the
source-relative recurrence.
Sections~\ref{sec:exact-2d-path-model}--\ref{sec:multidimensional-extension}
develop the finite-path model and its transmission semantics from 2D through
any fixed spatial dimension, derive the bridge laws, and establish eventual
agreement with continuous visibility under grid refinement.
Sections~\ref{sec:implementations}
and~\ref{sec:validation} present the full-volume algorithm and implementations,
verify that they preserve the recurrence, and then evaluate classification
accuracy under grid refinement and computational performance across all four
methods.
Section~\ref{sec:conclusion} closes with the model's scope, limitations, and
conclusions.

\section{Related Work}
\label{sec:related-work}

The word \emph{visibility} can name different questions.  An independent ray
test asks whether one source--target segment intersects an obstacle.  A
map-wide sweep instead propagates a scalar over the whole grid, and that scalar
may encode binary membership in a geometrically visible set, the fraction of
unobstructed paths in a discrete path family, or accumulated transmission.
Although ray tests and
map-wide sweeps can share a radial ordering, they need not compute
interchangeable quantities.  We therefore organize the closest work by the
quantity each method computes, rather than by the common label
\emph{visibility}.

Tsai et al.\ formulated visibility in a level-set framework, transported a
ray-wise quantity with a source-centred partial differential equation (PDE), and
derived ordered 2D and 3D interpolation schemes
\cite{Tsai2004Visibility}.  Kao and Tsai subsequently developed the
dynamic-programming and Hamilton--Jacobi analysis, including stronger
convergence results \cite{KaoTsai2008Visibility}.  Oberman and Salvador cast
visibility as a local obstacle problem whose solution represents a
star-shaped visible region and proved convergence for a monotone scheme
\cite{ObermanSalvador2020Visibility}.  These studies establish radial PDE
formulations and causal sweeps for map-wide visibility, but they propagate
ray-wise geometric quantities or star-shaped envelopes rather than the
multiplicative cell-transmission field considered here.

Goldstein et al.\ connected dominant-axis interpolation to shortest-grid-path
counting in 2D \cite{Goldstein2020SpaceAnalysis,Goldstein2022PathCounting}.
For a binary map, the visibility recurrence they analyse returns the fraction of
unobstructed shortest grid paths.  Their product of forward and backward path counts,
divided by the total number of source--target paths, is algebraically identical
to the probability that a uniformly selected path passes through an
intermediate cell.
We therefore do not claim the 2D coefficients, the recurrence that computes
the fraction of unobstructed paths,
or through-cell counting identity as new.  Building on these identities, this
paper identifies and develops the hypergeometric bridge for the 2D path
family---the endpoint-conditioned discrete distribution that assigns to each
cell on an intermediate layer the probability that a uniformly selected path
passes through that cell---and extends the finite-path average to nonbinary cell
transmissions.

Volumetric visibility has also been computed directly on voxel grids.
Fisher-Gewirtzman et al.\ developed a voxel-based method for volumetric
visibility analysis in 3D urban environments that processes each voxel once
and returns continuous values between fully visible and fully invisible
\cite{FisherGewirtzman2013VoxelVisibility}.

Goldstein et al.\ describe two linear 3D families
\cite{Goldstein2022PathCounting}.  The standard six-neighbour recurrence
combines up to three axial neighbours, each one grid step toward the source
along a coordinate axis.  Their 26-neighbour method
divides directions into 48 tetrahedral regions and combines up to three
predecessors connected by axis-aligned, face-diagonal, or space-diagonal moves.
In both published recurrences, a predecessor contributes only when the local
predecessor--target edge passes a binary visibility test.

The later VASA system provides a mature family of architectural voxel analyses
and states that point visibility uses the standard six-neighbour family
\cite{Goldstein2023VASA}.  The downloadable VASA distribution contains compiled
libraries rather than source code, and its visibility operations apply a
user-specified threshold before returning a voxel model rather than the
underlying scalar field.  The six-neighbour comparator used later in this
paper therefore represents the published formula family, not a reproduction of
VASA's packaged implementation or output.  The four-parent dominant-axis
recurrence formalized in this paper instead interpolates the four corners of the
grid face crossed one layer inward along the dominant coordinate, applies
destination-cell transmission, and retains the resulting scalar field.  We develop the
corresponding 3D product bridge to explain the four bilinear coefficients of
the four-parent recurrence.  The six-neighbour, tetrahedral, and
four-parent bilinear recurrences average over distinct discrete path families
determined by their predecessor rules, so their scalar values are not
interchangeable.

Radial transport, causal sweeps, 2D interpolation and path counts, and several
3D voxel families precede this work.  We connect the source-relative
dominant-axis recurrences to the canonical path families whose transmission
products they average, the hypergeometric and 3D product bridges,
cell-transmission semantics, and
validated NumPy, OpenMP, and CUDA implementations of the four-parent 3D
recurrence.  All three implementations compute full volumes while preserving
the finite-path model of the dominant-axis recurrence, so the value returned at
each grid cell remains the average of the transmission products over that
cell's source-to-cell canonical path family.  For test cases with
tractable path counts, exhaustive enumeration independently checks that the
computed fields agree with the finite-path averages.
Section~\ref{sec:recurrence} begins by defining the propagated values and
weights.

\section{Dominant-Axis Scalar Visibility Recurrence}
\label{sec:recurrence}

\subsection{Input, output, and source-relative transport}

Let $\mathcal{G}_h\subset\mathbb{R}^d$, with spatial dimension
$d\in\{2,3\}$, be a rectangular Cartesian grid of cell centres with spacing
$h$, and let the source $\boldsymbol{s}\in\mathcal{G}_h$ be one of those
centres.  Each grid location
$\boldsymbol{q}$ is assigned a cell transmission
$\tau(\boldsymbol{q})\in[0,1]$: zero is fully blocking, one is free, and values
between zero and one attenuate the propagated scalar.  The
output $V(\boldsymbol{q})\in[0,1]$ is a scalar visibility value.  The initial
visibility value $V_s=V(\boldsymbol{s})$, assigned at the source cell, supplies
the boundary condition for the outward recurrence.  Unless stated otherwise,
we set $V_s=1$, which we call the
\emph{unit-source convention}.  To include source-cell transmission, the
alternative \emph{transmitting-source convention} sets
$V_s=\tau(\boldsymbol{s})$.  For a
fixed $\tau$, the scalar field $V$ is deterministic.  This scalar transmission
model does not represent scattering, refraction, or multiple-bounce light
transport.

If $\tau$ is instead interpreted as a field of marginal free probabilities,
with $\tau(\boldsymbol{q})$ denoting the probability that cell
$\boldsymbol{q}$ is free when considered individually, then the product of the
cellwise probabilities along a path represents the path's expected
transmission only when the required joint moments factorize.  Equivalently, the
probability that all cells on a path are free must equal the product of their
individual free probabilities, as it does when their states are independent.
The exact 2D path model will distinguish deterministic cell transmission from
map uncertainty.

First ignore attenuation and ask how a value is carried along a ray from the
source, following the source-centred radial-transport formulations in
\cite{KaoTsai2008Visibility,Tsai2004Visibility}.  Let
$\boldsymbol{x}\in\mathbb{R}^d\setminus\{\boldsymbol{s}\}$ be a spatial point,
not necessarily a cell centre in $\mathcal{G}_h$.  Define its source-relative
displacement, the vector from $\boldsymbol{s}$ to $\boldsymbol{x}$, by
\begin{equation}
	\boldsymbol{\rho}=\boldsymbol{x}-\boldsymbol{s}.
	\label{eq:source-relative-displacement}
\end{equation}
Fix a nonzero direction vector $\boldsymbol{\rho}_0$.  The corresponding ray
from the source is $\boldsymbol{x}(\lambda)=\boldsymbol{s}+
	\lambda\boldsymbol{\rho}_0$, where $\lambda>0$ is a scalar ray parameter.  For
a scalar field $u$, transport without attenuation means that $u$ does not
change as $\lambda$ moves a point outward along this fixed source ray.  Since
$\boldsymbol{x}'(\lambda)=\boldsymbol{\rho}_0$, the chain rule gives
\[
	\frac{\mathrm{d}}{\mathrm{d}\lambda}u(\boldsymbol{x}(\lambda))
	=\boldsymbol{\rho}_0\mathbin{\cdot}\nabla u(\boldsymbol{x}(\lambda))=0.
\]
Thus $u$ may vary between source rays but remains constant along each one.
Since $\boldsymbol{\rho}=\lambda\boldsymbol{\rho}_0$ on that ray, the
canonical source-relative transport equation is
\begin{equation}
	(\boldsymbol{x}-\boldsymbol{s})\mathbin{\cdot}\nabla u(\boldsymbol{x})=0,
	\qquad \boldsymbol{x}\ne\boldsymbol{s}.
	\label{eq:radial-transport}
\end{equation}
The source is a prescribed boundary value because the radial direction is not
defined there.  Dividing~\eqref{eq:radial-transport} by
$\|\boldsymbol{x}-\boldsymbol{s}\|$ rescales the equation without changing its
characteristic rays.

\subsection{Following a ray one layer inward}

Consider a target $\boldsymbol{q}\in\mathcal{G}_h$ distinct from the source.
Let $\boldsymbol{n}=(n_1,\ldots,n_d)
	\in\mathbb{Z}^d\setminus\{\boldsymbol{0}\}$ denote its source-relative
displacement measured in grid steps, so that
\begin{equation}
	\boldsymbol{q}=\boldsymbol{s}+h\boldsymbol{n}.
	\label{eq:grid-displacement}
\end{equation}
Let $M=\|\boldsymbol{n}\|_\infty$.  Thus $M$ counts outward layers in the
coordinate with the largest source-relative displacement.  Choose one such
coordinate $m$, called the \emph{dominant coordinate}, with ties assigned to
the smallest coordinate index ($x$, then $y$, then $z$ in 3D).  Define
\begin{equation}
	\sigma_j=\operatorname{sign}(n_j)\quad(j=1,\ldots,d),
	\qquad p_j=\frac{|n_j|}{M}\quad(j\ne m).
	\label{eq:signs-and-ratios}
\end{equation}
The signs encode the source-to-target direction in each coordinate.  Thus,
$+\sigma_j\boldsymbol{e}_j$ points toward the target, whereas
$-\sigma_j\boldsymbol{e}_j$ points back toward the source.  Because $|n_m|=M$,
the corresponding dominant-coordinate ratio would be $p_m=1$:
moving one layer inward changes coordinate $m$ by exactly one grid spacing, so
interpolation is required only in the nondominant coordinates $j\ne m$, where
the ratios $p_j\in[0,1]$ determine the weights.

Backtracking the source--target ray until coordinate $m$ has moved inward by one
grid spacing corresponds to $\lambda=(M-1)/M$ and reaches
\begin{equation}
	\boldsymbol{q}^{\star}
	=\boldsymbol{s}+\frac{M-1}{M}(\boldsymbol{q}-\boldsymbol{s}).
	\label{eq:characteristic-foot}
\end{equation}
In 2D, $\boldsymbol{q}^{\star}$ lies on the grid edge one layer inward along
the dominant coordinate, whereas in 3D it lies on the corresponding grid
face.  Among the grid corners surrounding $\boldsymbol{q}^{\star}$, every corner with nonzero
interpolation weight lies on layer $M-1$ and has therefore already been
computed in the outward sweep.  We call each such contributing corner a
\emph{predecessor}.

Let $\boldsymbol{e}_j$ denote the $j$th coordinate unit vector.  Each corner of
the grid edge or face containing $\boldsymbol{q}^{\star}$ includes the
dominant-coordinate displacement
$-h\sigma_m\boldsymbol{e}_m$, which moves one grid layer toward the source.  To
enumerate the remaining coordinate choices, assign one binary indicator
$\epsilon_j$ to each nondominant coordinate $j\ne m$: setting $\epsilon_j=0$
leaves coordinate $j$ equal to $q_j$, whereas $\epsilon_j=1$ applies
$-h\sigma_j\boldsymbol{e}_j$.  Each binary vector
$\boldsymbol{\epsilon}=(\epsilon_j)_{j\ne m}\in\{0,1\}^{d-1}$ identifies the
grid point
\begin{equation}
	\boldsymbol{q}_{\boldsymbol{\epsilon}}
	=\boldsymbol{q}-h\sigma_m\boldsymbol{e}_m
	-h\sum_{j\ne m}\epsilon_j\sigma_j\boldsymbol{e}_j.
	\label{eq:predecessor-corners}
\end{equation}
The $2^{d-1}$ labels cover two endpoint choices in 2D and four face-corner
choices in 3D.  Each choice receives an interpolation weight below, and any
zero-weight choice drops out of the interpolated value.

For each binary label, define the one-dimensional factors and their
tensor-product weight by
\begin{equation}
	\omega_j(0)=1-p_j,
	\qquad \omega_j(1)=p_j,
	\qquad
	w_{\boldsymbol{\epsilon}}=\prod_{j\ne m}\omega_j(\epsilon_j).
	\label{eq:predecessor-weights}
\end{equation}
The grid points $\boldsymbol{q}_{\boldsymbol{\epsilon}}$, together with their
corresponding weights $w_{\boldsymbol{\epsilon}}$, form the local free-transport
stencil at $\boldsymbol{q}$.
Linear or bilinear interpolation at the ray--layer intersection is then
\begin{equation}
	\mathcal{I}_h V(\boldsymbol{q})
	=\sum_{\boldsymbol{\epsilon}\in\{0,1\}^{d-1}}
	w_{\boldsymbol{\epsilon}}
	V(\boldsymbol{q}_{\boldsymbol{\epsilon}}).
	\label{eq:multilinear-interpolant}
\end{equation}
In 2D, suppose $x$ is dominant and set $p_y=|n_y|/|n_x|$.  With
\[
	\boldsymbol{q}_a=\boldsymbol{q}-h\sigma_x\boldsymbol{e}_x
	-ah\sigma_y\boldsymbol{e}_y,
	\qquad a\in\{0,1\},
\]
the two-parent free-transport interpolant is
\begin{equation}
	\mathcal{I}_hV(\boldsymbol{q})
	=(1-p_y)V(\boldsymbol{q}_0)+p_yV(\boldsymbol{q}_1).
	\label{eq:2d-interpolant}
\end{equation}
The $y$-dominant formula follows by exchanging $x$ and $y$.

Figure~\ref{fig:stencil-geometry-2d} makes the local geometry explicit for the
$x$-dominant target $\boldsymbol{q}=(3,1)$ on a unit grid with source
$\boldsymbol{s}=(0,0)$.

\begin{figure*}[t]
	\centering
	\includegraphics[width=0.82\textwidth]{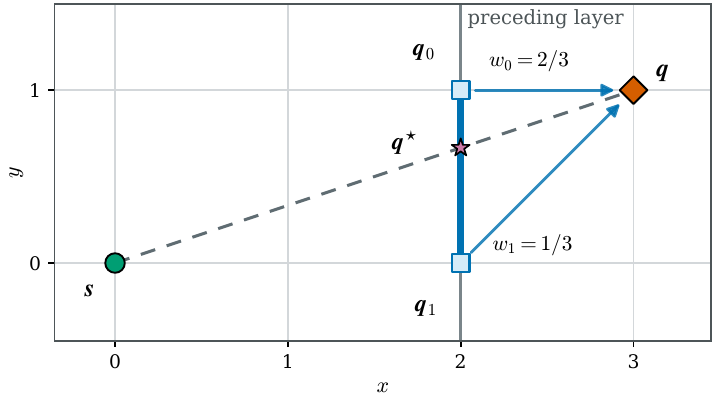}
	\caption{Local two-dimensional free-transport stencil.  The source--target
		ray meets the preceding layer at
		$\boldsymbol{q}^{\star}=(2,2/3)$, between predecessors
		$\boldsymbol{q}_0=(2,1)$ and $\boldsymbol{q}_1=(2,0)$.  Their weights
		$2/3$ and $1/3$ give the interpolation in
		Eq.~\eqref{eq:2d-interpolant}.  The target transmission
		$\tau(\boldsymbol{q})$ is applied afterward in
		Eq.~\eqref{eq:visibility-recurrence}.}
	\label{fig:stencil-geometry-2d}
\end{figure*}

Our earlier work \cite{IbrahimEtAl2024Visibility} used the two-dimensional
free-transport interpolant in Eq.~\eqref{eq:2d-interpolant}.  This interpolant
corresponds to the source-relative operator in
Eq.~\eqref{eq:radial-transport}, although the
paper's displayed transport equation and accompanying scalar/upwind derivation
interchanged the coordinate components and derivatives.

In 3D, suppose $x$ is dominant and put $p_y=|n_y|/|n_x|$ and
$p_z=|n_z|/|n_x|$.  Define
\[
	\boldsymbol{q}_{ab}=\boldsymbol{q}-h\sigma_x\boldsymbol{e}_x
	-ah\sigma_y\boldsymbol{e}_y-bh\sigma_z\boldsymbol{e}_z,
	\qquad a,b\in\{0,1\}.
\]
The four-parent free-transport interpolant is
\begin{align}
	\mathcal{I}_hV(\boldsymbol{q})={} &
	(1-p_y)(1-p_z)V(\boldsymbol{q}_{00})
	+p_y(1-p_z)V(\boldsymbol{q}_{10}) \notag                              \\
	                                  & +(1-p_y)p_zV(\boldsymbol{q}_{01})
	+p_yp_zV(\boldsymbol{q}_{11}).
	\label{eq:3d-interpolant}
\end{align}
The other dominant coordinates follow by permutation.

Degenerate transverse-coordinate ratios require no separate numerical formula.
If $n_j=0$, then $p_j=0$ and only the corner that does not move in coordinate
$j$ has nonzero weight.  If $|n_j|=M$, coordinate $j$ ties the dominant
magnitude, so $p_j=1$ and every nonzero predecessor moves inward in that
coordinate as well.  An axis-aligned target therefore has one axial
predecessor, a 2D diagonal target has one diagonal predecessor, and a 3D
space-diagonal target has one space-diagonal predecessor.  On a 3D face
diagonal, both tied coordinates move together, and only a strictly smaller
third coordinate is interpolated.  Choosing a different tied dominant
coordinate produces the same set of nonzero-weight predecessors, so the
$x$--$y$--$z$ priority selects one tied coordinate as dominant for each target
without changing the interpolated value.

Attenuation is applied only after the free-transport interpolation.  The full
recurrence is
\begin{equation}
	V(\boldsymbol{s})=V_s,
	\qquad
	V(\boldsymbol{q})=\tau(\boldsymbol{q})
	\mathcal{I}_hV(\boldsymbol{q})
	\quad(\boldsymbol{q}\ne\boldsymbol{s}).
	\label{eq:visibility-recurrence}
\end{equation}
The factor $\tau$ models discrete attenuation separately from the
free-transport PDE~\eqref{eq:radial-transport} and does not define a continuous
optical equation.

\subsection{Causality, boundedness, and local PDE consistency}

The visibility recurrence above is useful only if the weights in its
free-transport interpolant form a valid average, every predecessor with
nonzero weight has already been computed, and the interpolant remains
consistent with the radial PDE.  The following proposition establishes the
averaging and causality properties, after which a Taylor expansion verifies
local PDE consistency.

\begin{proposition}[Recurrence invariants]
	For $d\in\{2,3\}$, the weights in~\eqref{eq:predecessor-weights} are
	nonnegative and sum to one.  Every predecessor with nonzero weight lies on
	layer $M-1$.  Consequently, the recurrence is causal.  If $V_s\in[0,1]$ and
	$\tau\in[0,1]$, then $V\in[0,1]$.  When $\tau\equiv1$, the initial visibility
	value $V_s$ is transported throughout the grid.
\end{proposition}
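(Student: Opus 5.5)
The plan is to verify the five claims in order, each from the explicit formulas \eqref{eq:predecessor-weights}--\eqref{eq:visibility-recurrence}. For the weights, note that $p_j=|n_j|/M\in[0,1]$ for every $j\ne m$, because $M=\|\boldsymbol{n}\|_\infty\ge|n_j|$ and $M\ge1$. Hence $\omega_j(0)=1-p_j$ and $\omega_j(1)=p_j$ are nonnegative, and so is each product $w_{\boldsymbol{\epsilon}}$. The sum factorizes over the product structure of $\{0,1\}^{d-1}$:
\[
\sum_{\boldsymbol{\epsilon}}w_{\boldsymbol{\epsilon}}=\prod_{j\ne m}\bigl(\omega_j(0)+\omega_j(1)\bigr)=1 .
\]

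For the layer claim, I would compute $\|\boldsymbol{n}_{\boldsymbol{\epsilon}}\|_\infty$, where $\boldsymbol{q}_{\boldsymbol{\epsilon}}=\boldsymbol{s}+h\boldsymbol{n}_{\boldsymbol{\epsilon}}$.
\begin{itemize}
\item The dominant component becomes $n_m-\sigma_m$, whose magnitude is $M-1$, since $|n_m|=M\ge1$.
\item For $j\ne m$ with $\epsilon_j=0$, the component is $n_j$, with $|n_j|\le M$.
\item For $\epsilon_j=1$ with $n_j\ne0$, the component is $n_j-\sigma_j$, with magnitude $|n_j|-1\le M-1$.
\item For $\epsilon_j=1$ with $n_j=0$, we have $\sigma_j=0$, so the component stays at $0$.
\end{itemize}
The only way the layer can exceed $M-1$ is therefore a coordinate with $\epsilon_j=0$ and $|n_j|=M$. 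In that case $p_j=1$, so $\omega_j(0)=0$ and $w_{\boldsymbol{\epsilon}}=0$. Hence every nonzero-weight predecessor has all components bounded by $M-1$, with equality in coordinate $m$, and so lies exactly on layer $M-1$. This tie case is the one step requiring care; I expect it to be the main (mild) obstacle, because the tie-breaking convention must not matter. Nothing further is needed here, though, since the argument treats every $j\ne m$ uniformly.

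Causality then follows by induction on $M$. Layer $0$ is the source alone, with its value prescribed. If all values on layer $M-1$ are defined, each target on layer $M$ depends only on those values, because zero-weight terms drop out. A layer-by-layer outward sweep therefore evaluates every term before it is needed. This also requires the nonzero-weight predecessors to lie inside the grid. For a rectangular grid containing $\boldsymbol{s}$ and $\boldsymbol{q}$, each predecessor coordinate lies between the corresponding coordinates of $\boldsymbol{s}$ and $\boldsymbol{q}$, so the predecessor is in the grid's bounding box.

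Boundedness uses the same induction. If the values on layer $M-1$ lie in $[0,1]$, then $\mathcal{I}_hV(\boldsymbol{q})$ is a convex combination of them and lies in $[0,1]$. Multiplying by $\tau(\boldsymbol{q})\in[0,1]$ keeps $V(\boldsymbol{q})$ in $[0,1]$. Finally, when $\tau\equiv1$, the inductive hypothesis $V\equiv V_s$ on layer $M-1$ gives $V(\boldsymbol{q})=V_s\sum_{\boldsymbol{\epsilon}}w_{\boldsymbol{\epsilon}}=V_s$ at every target on layer $M$.
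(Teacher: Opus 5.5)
Your proposal is correct and takes the same route as the paper. The weight sum factorizes as $\prod_{j\ne m}\bigl((1-p_j)+p_j\bigr)=1$. The dominant coordinate drops to magnitude $M-1$, and any transverse coordinate tied at $M$ has $p_j=1$, so it must move inward in every nonzero-weight corner. Induction over $M$ then gives causality, $V\in[0,1]$, and constant transport when $\tau\equiv1$.

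Your explicit check that nonzero-weight predecessors stay inside the source--target box, and hence inside the grid, is a small point this proof leaves implicit. The paper states it only later, in Theorem~\ref{thm:fixed-d-product-bridge}.
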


\begin{proof}
	Each $p_j$ belongs to $[0,1]$, so every factor $\omega_j$ is nonnegative, and
	\[
		\sum_{\boldsymbol{\epsilon}}w_{\boldsymbol{\epsilon}}
		=\prod_{j\ne m}\bigl((1-p_j)+p_j\bigr)=1.
	\]
	The $m$th source-relative coordinate of every labelled point
	$\boldsymbol{q}_{\boldsymbol{\epsilon}}$ has magnitude $M-1$.  Any other
	source-relative coordinate tied at magnitude $M$ has
	$p_j=1$, so a nonzero-weight corner must move that coordinate inward too.  All
	remaining coordinates already have magnitude at most $M-1$.  Thus every
	predecessor lies on the preceding layer.
	Causality follows, and induction over $M$ proves boundedness.  The same
	induction preserves a constant value when $\tau\equiv1$.
	\qed
\end{proof}

The free-transport interpolant~\eqref{eq:multilinear-interpolant} is also
locally consistent with the radial equation~\eqref{eq:radial-transport} from
which the interpolant was derived.  For a smooth $u$, Taylor expansion about
$\boldsymbol{q}$ gives
\begin{equation}
	\mathcal{I}_hu(\boldsymbol{q})
	=u(\boldsymbol{q})-h\left(
	\sigma_m u_{x_m}
	+\sum_{j\ne m}\sigma_jp_j u_{x_j}\right)+O(h^2).
	\label{eq:taylor-consistency}
\end{equation}
All derivatives in~\eqref{eq:taylor-consistency} are evaluated at
$\boldsymbol{q}$.  The expression in parentheses equals
$(\boldsymbol{q}-\boldsymbol{s})\mathbin{\cdot}\nabla
	u/|q_m-s_m|$.  Hence $(u-\mathcal{I}_hu)/h$ is a first-order approximation of
the scaled radial operator inside each dominant-coordinate region.  For a
smooth ray-constant field, the radial operator vanishes, leaving second-order
local interpolation error at grid points distinct from the source.  The
first-order operator consistency and second-order ray-constant interpolation
error concern the free-transport stencil and therefore do not, by themselves,
establish exact agreement with continuous visibility on a fixed grid.  Theorem~\ref{thm:nongrazing-refinement-consistency} later establishes when
thresholded recurrence classifications agree with continuous visibility under
grid refinement.

The source-ray derivation identifies the predecessor locations and weights,
while the invariant proof establishes that the outward sweep is causal and
preserves the value bounds.  Neither result explains what a visibility value
strictly between zero and one measures.  Explaining a fractional visibility
value requires identifying the source-to-target paths whose transmission
products the recurrence averages and showing why the proportion of paths
entering through each of the target's two predecessors equals the corresponding
linear weight in
Eq.~\eqref{eq:2d-interpolant}.

\section{Exact Two-Dimensional Path Model}
\label{sec:exact-2d-path-model}

The visibility recurrence derived in Section~\ref{sec:recurrence} computes the
scalar visibility value $V(\boldsymbol{q})$ at each target cell from predecessors
on the immediately preceding layer.  Repeatedly
substituting those predecessor values must terminate at the source cell, where
the initial visibility value is $V_s$, because every substitution reduces the
layer index by one.  Reading the same
dependencies from the source toward the target turns each possible sequence of
predecessor choices into a grid path.  This section defines the resulting finite
family of grid paths and proves that, for every target cell $\boldsymbol{q}$, the scalar
visibility value $V(\boldsymbol{q})$ equals $V_s$ times the mean of the products
of the cell transmissions encountered along the
paths in that family.

\subsection{Canonical paths from layer choices}

Consider the two-dimensional case and retain the source-relative notation of
Section~\ref{sec:recurrence}.  For a target
$\boldsymbol{q}=\boldsymbol{s}+h\boldsymbol{n}$ distinct from the source, let
$m$ be its dominant-coordinate index and let $r$ denote the other, transverse
coordinate.  The dominant displacement is $M=|n_m|$, while
\begin{equation}
	R=|n_r|,
	\qquad 0\leq R\leq M,
	\label{eq:transverse-advance-count}
\end{equation}
is the number of transverse advances required to reach the target.  Every path
makes one signed dominant-coordinate step
$h\sigma_m\boldsymbol{e}_m$ on each of its $M$ ordered steps.  Exactly $R$ of
those steps also include the signed transverse displacement
$h\sigma_r\boldsymbol{e}_r$.

Choose the $R$ steps that advance transversely by a subset
$S\subseteq\{1,\ldots,M\}$ with $|S|=R$.  After $\ell$ steps, the resulting
path is at
\begin{equation}
	\begin{aligned}
		\boldsymbol{q}_{\ell}(S)
		&=\boldsymbol{s}+h\left(
		\ell\sigma_m\boldsymbol{e}_m
		+|S\cap\{1,\ldots,\ell\}|\sigma_r\boldsymbol{e}_r
		\right),\\
		&\hspace{7.5em}0\leq\ell\leq M.
	\end{aligned}
	\label{eq:canonical-2d-path-point}
\end{equation}
The subset $S$ is the path's \emph{transverse-advance schedule}.  It uniquely
determines the sequence
$\pi_S=(\boldsymbol{q}_1(S),\ldots,\boldsymbol{q}_M(S))$, which records the
cells reached after the recurrence is initialized at the source and therefore
excludes the source while including the target.  The canonical path family is
therefore
\begin{equation}
	\begin{aligned}
		\mathcal{P}(\boldsymbol{s},\boldsymbol{q})
		&=\left\{\pi_S:
		S\subseteq\{1,\ldots,M\},\ |S|=R\right\},\\
		|\mathcal{P}(\boldsymbol{s},\boldsymbol{q})|
		&=\binom{M}{R}.
	\end{aligned}
	\label{eq:canonical-2d-path-family}
\end{equation}
For the coincident endpoint $\boldsymbol{q}=\boldsymbol{s}$, the family
contains one empty path.  An axis target has $R=0$ and therefore one
axis-aligned canonical path, whereas a diagonal target has $R=M$ and one path
on which both coordinates advance at every step.  Translation, coordinate reflection,
and exchange of the dominant coordinate give bijections between the remaining
signed cases.  Because each spatial coordinate along a path varies monotonically
between the corresponding coordinates of the source and target, every path
remains inside any rectangular grid that contains both endpoints.

Figure~\ref{fig:canonical-paths-2d} shows all six canonical paths in a small
example.

\begin{figure*}[t]
	\centering
	\includegraphics[width=\textwidth]{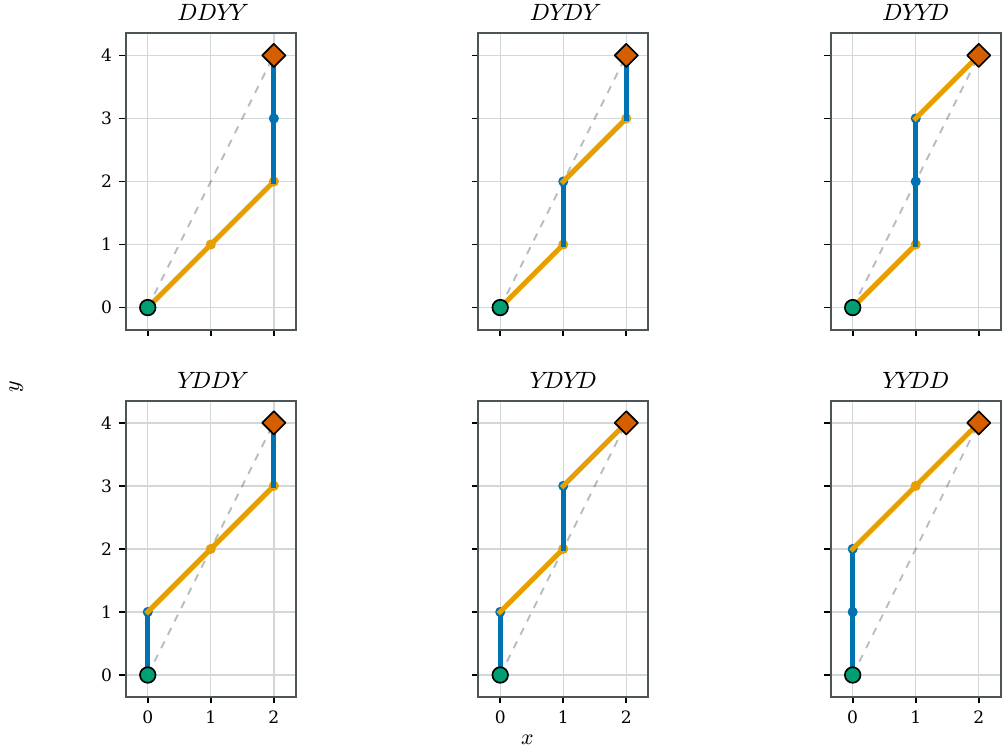}
	\caption{The six canonical paths from $\boldsymbol{s}=(0,0)$ to
		$\boldsymbol{q}=(2,4)$.  Blue segments are dominant-only steps $Y$,
		orange segments are diagonal steps $D$, and each four-letter word records
		the order of the $Y$ and $D$ steps.
		The dashed line shows the direct source--target segment and is not an
		additional member of the canonical path family.}
	\label{fig:canonical-paths-2d}
\end{figure*}

For this example, $h=1$, $\boldsymbol{s}=(0,0)$, and
$\boldsymbol{q}=(2,4)$.  The $y$ coordinate is dominant, so $M=4$ and $R=2$.
Write $Y=(0,1)$ for a dominant-only step and $D=(1,1)$ for a step that also
advances transversely.  Choosing two of the four step positions for $D$ gives
\begin{equation}
	\begin{aligned}
		\mathcal{P}((0,0),(2,4))
		=\{&DDYY,DYDY,DYYD,\\
		&YDDY,YDYD,YYDD\}.
	\end{aligned}
	\label{eq:six-path-words}
\end{equation}

\subsection{Uniform finite-path average}

For the canonical path $\pi_S$ determined by a transverse-advance schedule
$S$, define its transmission product by
\begin{equation}
	\mathcal{T}(\pi_S)
	=\prod_{\ell=1}^{M}\tau\bigl(\boldsymbol{q}_{\ell}(S)\bigr).
	\label{eq:path-transmission-product}
\end{equation}
Because $\pi_S$ lists the cells reached after source initialization,
$\mathcal{T}(\pi_S)$ contains every non-source attenuation factor along that
path, including the target's, while the initial visibility value $V_s$
multiplies the average below.  For a coincident endpoint, the path is empty and its
transmission product is one, so the aggregate equals $V_s$.  For
$\boldsymbol{q}\ne\boldsymbol{s}$, averaging the transmission products
uniformly across the canonical path family gives
\begin{equation}
	V_{\mathcal{P}}(\boldsymbol{q})
	=\frac{V_s}{\binom{M}{R}}
	\sum_{\substack{S\subseteq\{1,\ldots,M\}\\|S|=R}}
	\mathcal{T}(\pi_S),
	\qquad
	V_{\mathcal{P}}(\boldsymbol{s})=V_s.
	\label{eq:canonical-path-average}
\end{equation}
The schedules and canonical paths correspond one-to-one, so selecting the
schedules uniformly gives every path the same probability
$1/\binom{M}{R}$, although different paths may contribute different
transmission products.

For the six-path example, put
$\tau_{ij}=\tau((i,j))$.  Factoring out the target transmission makes the
complete expansion compact:
\begin{align}
	V_{\mathcal{P}}((2,4))
	={} & \frac{V_s\tau_{24}}{6}\bigl[
		\tau_{11}\tau_{22}\tau_{23}
		+\tau_{11}\tau_{12}\tau_{23}
	+\tau_{11}\tau_{12}\tau_{13} \notag \\
	    & \qquad
		+\tau_{01}\tau_{12}\tau_{23}
		+\tau_{01}\tau_{12}\tau_{13}
		+\tau_{01}\tau_{02}\tau_{13}
		\bigr].
	\label{eq:six-path-transmission-expansion}
\end{align}
The six products occur in the same order as the path words in
Eq.~\eqref{eq:six-path-words}.  The causal sweep does not construct this
expansion explicitly.  It appears when the dynamic program is unrolled for
one target: the recurrence is expanded by repeatedly substituting each
predecessor value until every branch reaches the source-cell base case $V_s$.

\begin{theorem}[Path-average identity]
	\label{thm:exact-2d-path-representation}
	For any source and target in a rectangular two-dimensional grid, any
	$V_s\in[0,1]$, and any deterministic transmission field
	$\tau:\mathcal{G}_h\to[0,1]$, the dominant-axis recurrence in
	Eq.~\eqref{eq:visibility-recurrence} satisfies
	$V(\boldsymbol{q})=V_{\mathcal{P}}(\boldsymbol{q})$.
\end{theorem}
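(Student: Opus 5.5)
We argue by induction on the layer index $M=\|\boldsymbol{n}\|_\infty$, showing that the path average $V_{\mathcal{P}}$ satisfies the same recurrence~\eqref{eq:visibility-recurrence} as $V$. The base case $M=0$ is the coincident endpoint, where both sides equal $V_s$. Since the recurrence expresses $V(\boldsymbol{q})$ uniquely in terms of values on layer $M-1$ (by the recurrence-invariants proposition), it suffices to prove, for every target with $M\ge1$,
\[
V_{\mathcal{P}}(\boldsymbol{q})=\tau(\boldsymbol{q})\bigl[(1-p)V_{\mathcal{P}}(\boldsymbol{q}_0)+p\,V_{\mathcal{P}}(\boldsymbol{q}_1)\bigr],\qquad p=R/M,
\]
where $\boldsymbol{q}_0,\boldsymbol{q}_1$ are the two labelled predecessors. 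The induction hypothesis then gives $V_{\mathcal{P}}=V$ on layer $M-1$, and hence on layer $M$.

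To establish this identity, split the schedules $S$ according to whether the last step $M$ belongs to $S$. Every path ends at $\boldsymbol{q}_M(S)=\boldsymbol{q}$, so $\tau(\boldsymbol{q})$ factors out of every transmission product. If $M\notin S$, then $S\subseteq\{1,\ldots,M-1\}$ with $|S|=R$, and the truncated path $(\boldsymbol{q}_1(S),\ldots,\boldsymbol{q}_{M-1}(S))$ ends at $\boldsymbol{q}_0=\boldsymbol{q}-h\sigma_m\boldsymbol{e}_m$. If $M\in S$, then $S\setminus\{M\}$ is an $(R-1)$-subset of $\{1,\ldots,M-1\}$, and the truncated path ends at $\boldsymbol{q}_1$. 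The key check is that each truncation is exactly a canonical path of the predecessor's own family $\mathcal{P}(\boldsymbol{s},\boldsymbol{q}_0)$ or $\mathcal{P}(\boldsymbol{s},\boldsymbol{q}_1)$, and that these truncations are bijections. Granting that, the split sum becomes
\[
\binom{M}{R}V_{\mathcal{P}}(\boldsymbol{q})=\tau(\boldsymbol{q})\Bigl[\tbinom{M-1}{R}V_{\mathcal{P}}(\boldsymbol{q}_0)+\tbinom{M-1}{R-1}V_{\mathcal{P}}(\boldsymbol{q}_1)\Bigr],
\]
with the convention that a binomial coefficient with negative or excessive lower index is zero. Dividing by $\binom{M}{R}$ and using $\binom{M-1}{R}/\binom{M}{R}=(M-R)/M=1-p$ and $\binom{M-1}{R-1}/\binom{M}{R}=R/M=p$ yields the identity.

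The main obstacle is the bijection check, because the predecessor's family is defined relative to its own dominant coordinate, and that coordinate may differ from $m$. For $\boldsymbol{q}_0$, the displacement is $(M-1,R)$ in the $(m,r)$ coordinates. When $R\le M-1$, coordinate $m$ remains dominant, possibly tied, and its family is indexed by $R$-subsets of $\{1,\ldots,M-1\}$ with the same step convention, so it matches. When $R=M$, the weight $1-p$ vanishes and $\binom{M-1}{R}=0$, so this branch is empty and never needs interpreting. For $\boldsymbol{q}_1$, the displacement $(M-1,R-1)$ keeps $m$ dominant whenever $R\ge1$, and when $R=0$ the branch is empty with weight $p=0$. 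The remaining case is a tie $|n_r|=|n_m|$ at a predecessor with $r$ designated dominant by the index priority. There we use that a diagonal target has the single path on which both coordinates advance at every step, and that this path is the same under either choice of dominant coordinate, so the family is unchanged. Finally, the signs $\sigma_m,\sigma_r$ are inherited unchanged by the predecessors, except when a displacement component is zero, in which case no transverse steps occur and the sign is irrelevant. The translation and reflection bijections noted after~\eqref{eq:canonical-2d-path-family} let us assume positive signs without loss of generality.
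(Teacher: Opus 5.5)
Your proposal is correct and takes essentially the same route as the paper. You split the schedules by whether $M\in S$, remove the target to get bijections onto the axis and diagonal predecessors' canonical families, factor out $\tau(\boldsymbol{q})$, apply $\binom{M-1}{R}/\binom{M}{R}=1-R/M$ and $\binom{M-1}{R-1}/\binom{M}{R}=R/M$ with zero coefficients for empty branches, and conclude by induction on $M$. Your explicit checks of which coordinate remains dominant, the tied-diagonal predecessor, and inherited signs only spell out what the paper covers in one remark about exchanging a tied dominant coordinate.
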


\begin{proof}
	The coincident case is the prescribed boundary value.  For $M>0$, divide the
	schedules in Eq.~\eqref{eq:canonical-2d-path-family} according to whether
	the transverse coordinate advances on step $M$, the final path step into the
	target---that is, whether $M\in S$.  The $\binom{M-1}{R}$ schedules that do
	not contain $M$ reach the target through the predecessor with no transverse
	advance
	\begin{equation}
		\boldsymbol{q}_A
		=\boldsymbol{q}-h\sigma_m\boldsymbol{e}_m,
		\label{eq:axis-only-final-predecessor}
	\end{equation}
	whereas the $\binom{M-1}{R-1}$ schedules that contain $M$ reach it through
	the diagonal predecessor
	\begin{equation}
		\boldsymbol{q}_D
		=\boldsymbol{q}-h\sigma_m\boldsymbol{e}_m
		-h\sigma_r\boldsymbol{e}_r.
		\label{eq:diagonal-final-predecessor}
	\end{equation}
	When $R<M$, removing the target from every path in the first group gives a
	bijection onto $\mathcal{P}(\boldsymbol{s},\boldsymbol{q}_A)$.  When $R>0$,
	the same removal maps the second group bijectively onto
	$\mathcal{P}(\boldsymbol{s},\boldsymbol{q}_D)$.  The first group is empty for
	$R=M$, and the second is empty for $R=0$.  We therefore use the standard
	convention that a binomial coefficient is zero when its lower index lies
	outside $\{0,\ldots,M-1\}$.  If a predecessor has equal coordinate
	displacements, exchanging its dominant coordinate leaves its canonical path
	family unchanged.

	Every path transmission product also contains the common factor
	$\tau(\boldsymbol{q})$.  Applying the relevant bijections and using
	\begin{equation}
		\frac{\binom{M-1}{R}}{\binom{M}{R}}=\frac{M-R}{M},
		\qquad
		\frac{\binom{M-1}{R-1}}{\binom{M}{R}}=\frac{R}{M},
		\label{eq:final-step-path-proportions}
	\end{equation}
	therefore gives
	\begin{equation}
		V_{\mathcal{P}}(\boldsymbol{q})
		=\tau(\boldsymbol{q})\left[
		\left(1-\frac{R}{M}\right)V_{\mathcal{P}}(\boldsymbol{q}_A)
		+\frac{R}{M}V_{\mathcal{P}}(\boldsymbol{q}_D)
		\right].
		\label{eq:path-average-recurrence}
	\end{equation}
	For an $x$-dominant target, $\boldsymbol{q}_A$ and
	$\boldsymbol{q}_D$ are the predecessors $\boldsymbol{q}_0$ and
	$\boldsymbol{q}_1$ in Eq.~\eqref{eq:2d-interpolant}.  The $y$-dominant case
	exchanges the coordinate roles.  Because $R/M$ is the transverse
	interpolation ratio, the bracketed expression in
	Eq.~\eqref{eq:path-average-recurrence} is exactly the two-parent
	free-transport interpolant in Eq.~\eqref{eq:2d-interpolant}.  The factor
	$\tau(\boldsymbol{q})$ outside the brackets then applies target-cell
	attenuation, so Eq.~\eqref{eq:path-average-recurrence} matches the complete
	two-dimensional visibility recurrence in
	Eq.~\eqref{eq:visibility-recurrence}.  The empty subfamily in an axis or
	diagonal case has zero coefficient, so every predecessor with nonzero
	coefficient lies on layer $M-1$ and contributes the finite-path average at
	that predecessor.
	Induction over the layer index $M$ proves the result.
	\qed
\end{proof}

\subsection{Uniform path selection and map uncertainty}

Theorem~\ref{thm:exact-2d-path-representation} establishes that the scalar
visibility value $V(\boldsymbol{q})$ equals $V_{\mathcal{P}}(\boldsymbol{q})$,
where Eq.~\eqref{eq:canonical-path-average} defines
$V_{\mathcal{P}}(\boldsymbol{q})$ as the uniform average of the transmission
products over the canonical path family.  If $\Pi$ denotes a
canonical path selected uniformly from
$\mathcal{P}(\boldsymbol{s},\boldsymbol{q})$, the resulting path-average
identity can be written as
\begin{equation}
	V(\boldsymbol{q})
	=V_s\,\mathbb{E}\!\left[\mathcal{T}(\Pi)\right].
	\label{eq:structural-path-expectation}
\end{equation}
For deterministic transmissions,
Eq.~\eqref{eq:structural-path-expectation} states that
$V(\boldsymbol{q})$ is $V_s$ times the mean of
the path transmission products over the canonical path family.
The randomness comes only from selecting $\Pi$; the cell transmissions remain
prescribed inputs.  Under the unit-source convention, binary cell transmissions
make $\mathcal{T}(\Pi)=1$ precisely when
every cell visited after the source is free, so $V(\boldsymbol{q})$ is the
fraction of the canonical path family that remains clear.  It is not the
indicator or probability that at least one canonical path is clear.
Exhaustive enumeration evaluates every path in the canonical family directly
rather than executing the causal sweep.  Agreement between the enumerated path
average and the scalar visibility value returned by the exact-arithmetic 2D
evaluator provides an independent check that the implementation satisfies
Eq.~\eqref{eq:structural-path-expectation}.

Goldstein et al.~\cite{Goldstein2020SpaceAnalysis,Goldstein2022PathCounting}
proved a closely related binary 2D result: the visibility recurrence they
analyse returns the fraction of shortest source-to-target grid paths that
remain unobstructed.  The formulation developed here differs in two ways.
First, Section~\ref{sec:recurrence} writes the recurrence relative to an
arbitrary source position and direction, so source translation, coordinate
signs, and tie cases are handled within one formulation.  Second,
Eq.~\eqref{eq:visibility-recurrence} replaces binary predecessor--target edge
tests with a prescribed transmission value in $[0,1]$ at the destination cell.
Each canonical path therefore contributes the product of the cell
transmissions it encounters, as defined in
Eq.~\eqref{eq:path-transmission-product}.
Theorem~\ref{thm:exact-2d-path-representation} proves that the recurrence
returns the uniform average of these path-transmission products.  For binary
transmissions, the fraction of canonical paths that remain clear is
consequently one case of a more general deterministic path-transmission
interpretation.
Section~\ref{sec:hypergeometric-bridge} uses this 2D identity to derive the
probability that a uniformly selected path passes through each intermediate
cell.  Section~\ref{sec:multidimensional-extension} then proves the
corresponding recurrence--path-average identity in 3D and, afterward, in
arbitrary fixed spatial dimension $d$.

If the input field instead assigns each cell its individual probability of
being free under a random-map model, multiplying those probabilities along a
path represents the path's expected transmission only when the required joint
moments factorize, as stated in Section~\ref{sec:recurrence}.

The uniform finite-path average in Eq.~\eqref{eq:structural-path-expectation}
explains how complete source-to-target paths
combine to produce the scalar visibility value $V(\boldsymbol{q})$, but it does
not yet say how those paths are distributed among the cells on each
intermediate layer.  At a fixed intermediate layer,
different canonical paths can pass through different cells according to how
many of the $R$ required transverse advances have already occurred.  The
remaining counting question is: what fraction of the canonical path family passes
through each possible cell on that layer?

\section{Endpoint-Conditioned Hypergeometric Bridge}
\label{sec:hypergeometric-bridge}

The closing question of Section~\ref{sec:exact-2d-path-model} concerns the same
canonical path family at one layer rather than along a complete source--target
path.  Retain a target distinct from the source and the notation $M$, $R$,
$S$, and $\ell$ established there.  Select the transverse-advance schedule
$S$ uniformly from the $\binom{M}{R}$ admissible schedules, which is equivalent
to selecting the corresponding canonical path uniformly.  For
$0\leq\ell\leq M$, define the prefix count
\begin{equation}
	K_{\ell}=\left|S\cap\{1,\ldots,\ell\}\right|.
	\label{eq:bridge-prefix-count}
\end{equation}
The count $K_{\ell}$ is the number of transverse advances completed during the
first $\ell$ ordered path steps from the source toward the target.  If
$K_{\ell}=k$, the cell reached after those steps is
\begin{equation}
	\boldsymbol{q}_{\ell,k}
	=\boldsymbol{s}+h\left(
	\ell\sigma_m\boldsymbol{e}_m
	+k\sigma_r\boldsymbol{e}_r
	\right),
	\qquad
	\boldsymbol{q}_{\ell}(S)=\boldsymbol{q}_{\ell,K_{\ell}}.
	\label{eq:bridge-layer-cell}
\end{equation}
For the bridge geometry, $\boldsymbol{q}_{0,0}=\boldsymbol{s}$ records the
fixed layer-zero endpoint.  The canonical path and its transmission product
remain defined as in Section~\ref{sec:exact-2d-path-model}: they begin after
source initialization and therefore do not acquire a source-cell transmission
factor.

\subsection{Incidence on a fixed layer}

At a fixed positive layer $\ell$, a canonical path passes through
$\boldsymbol{q}_{\ell,k}$ exactly when its schedule has selected $k$ of the
first $\ell$ step indices.  Such a schedule chooses those $k$ indices from the
prefix and the remaining $R-k$ transverse advances from the $M-\ell$ indices
after that layer.  The fraction of admissible schedules for which the
corresponding path reaches $\boldsymbol{q}_{\ell,k}$ is therefore
\begin{equation}
	H_{\ell,k}
	:=\Pr\!\left(K_{\ell}=k\right)
	=\frac{\binom{\ell}{k}\binom{M-\ell}{R-k}}
	{\binom{M}{R}},
	\label{eq:hypergeometric-bridge-mass}
\end{equation}
on the feasible support
\begin{equation}
	\max\{0,R-(M-\ell)\}
	\leq k\leq
	\min\{\ell,R\}.
	\label{eq:hypergeometric-bridge-support}
\end{equation}
Outside this range, $H_{\ell,k}=0$.  The events indexed by the support exhaust
the admissible schedules, so their probabilities sum to one.  In ordinary
sampling language, the $M$ step indices form a finite population containing
exactly $R$ transverse-advance indices, and $K_{\ell}$ counts how many occur in
the first $\ell$ draws without replacement.  Conditioning the total count at
the target to equal $R$ makes Eq.~\eqref{eq:hypergeometric-bridge-mass} an
endpoint-conditioned discrete hypergeometric distribution, which we call the
\emph{hypergeometric bridge}.  Its value $H_{\ell,k}$ is the
\emph{path-incidence probability}: for $1\leq\ell\leq M$, it is the probability
that a uniformly selected canonical path passes through
$\boldsymbol{q}_{\ell,k}$.  At $\ell=0$, $H_{0,0}=1$ instead records the fixed
source endpoint of the bridge.

Figure~\ref{fig:path-incidence-2d} evaluates the bridge on the six-path example
from Section~\ref{sec:exact-2d-path-model}.

\begin{figure*}[t]
	\centering
	\includegraphics[width=\textwidth]{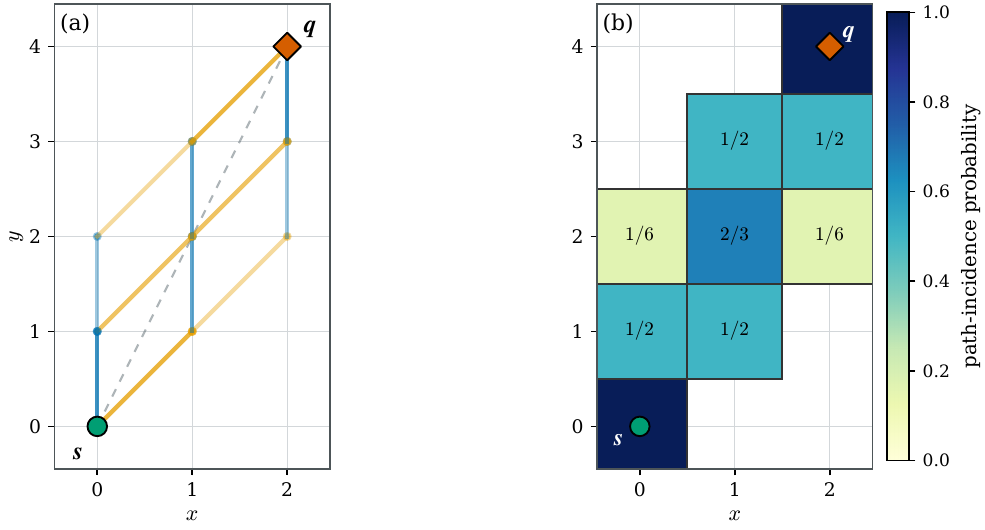}
	\caption{Path-incidence probabilities for the six canonical paths from
		$\boldsymbol{s}=(0,0)$ to $\boldsymbol{q}=(2,4)$.  Panel (a) overlays the
		paths shown separately in Fig.~\ref{fig:canonical-paths-2d}; blue segments
		are dominant-only steps and orange segments also advance transversely.
		Panel (b) assigns each reached cell the fraction of the six paths passing
		through it.  The source and target layers are point masses with probability
		one.  Showing the source probability records the bridge endpoint and does
		not include source-cell transmission in the path product.}
	\label{fig:path-incidence-2d}
\end{figure*}

Goldstein et al.~\cite{Goldstein2022PathCounting} count the same endpoint-fixed
paths through an intermediate cell by multiplying the numbers of paths before
and after that cell.  Dividing the resulting product by the total number of
source--target paths gives exactly Eq.~\eqref{eq:hypergeometric-bridge-mass}.
Identifying the resulting path-incidence probability as a hypergeometric bridge
makes its exact support, moments, and endpoint pinching explicit and connects
those properties to the scalar and multidimensional visibility model.

\subsection{Mean location, transverse spread, and fixed endpoints}

The bridge moments describe how the canonical path family is distributed
across an intermediate layer.  Write $p=R/M$ for the fraction of the $M$ step
indices at which a path advances transversely.  Because the schedule is
selected uniformly from all subsets of size $R$, any specified step index is
selected with probability $p$.  For each $i\in\{1,\ldots,M\}$, let $I_i=1$ if
$i\in S$ and $I_i=0$ otherwise.  Thus $I_i$ records whether step $i$ includes
a transverse advance, and $K_{\ell}=\sum_{i=1}^{\ell}I_i$ counts the transverse
advances completed during the first $\ell$ steps.  Since
$\mathbb{E}[I_i]=p$, linearity of expectation gives
\begin{equation}
	\mathbb{E}[K_{\ell}]=\frac{\ell R}{M}=\ell p.
	\label{eq:hypergeometric-bridge-mean}
\end{equation}
For $M>1$ and $i\ne j$, the product $I_iI_j$ equals one exactly when both
indices belong to $S$.  Fixed-cardinality sampling therefore gives
$\mathbb{E}[I_iI_j]=\Pr(i,j\in S)=R(R-1)/(M(M-1))$.  The resulting covariance,
$\operatorname{Cov}(I_i,I_j)=-p(1-p)/(M-1)$, is negative because selecting one
index leaves only $R-1$ transverse advances among the other $M-1$ indices.
Summing the $\ell$ indicator variances and their pairwise covariances gives
\begin{equation}
	\operatorname{Var}(K_{\ell})
	=\ell p(1-p)\frac{M-\ell}{M-1}.
	\label{eq:hypergeometric-bridge-variance}
\end{equation}
When $M=1$, the distribution is a point mass and its variance is zero.

Mapping the mean count through Eq.~\eqref{eq:bridge-layer-cell} gives
\begin{equation}
	\mathbb{E}\!\left[\boldsymbol{q}_{\ell,K_{\ell}}\right]
	=\boldsymbol{s}+\frac{\ell}{M}
	\left(\boldsymbol{q}-\boldsymbol{s}\right).
	\label{eq:bridge-mean-cell}
\end{equation}
Thus the mean of the cells reached on layer $\ell$ lies on the source--target
segment, while Eq.~\eqref{eq:hypergeometric-bridge-variance} quantifies the
finite transverse spread about that mean.  At $\ell=0$, every schedule has
$K_0=0$, and at $\ell=M$, every schedule has $K_M=R$; the bridge therefore
collapses to the source and target at its two endpoints.  This collapse is the
bridge's \emph{endpoint pinching}.  An axis target, for which $R=0$, and a
diagonal target, for which $R=M$, are point masses on every layer.  The
coincident source and target similarly give the single layer-zero point mass.
The mean alignment and endpoint pinching describe the declared canonical path
family; neither property makes the thresholded field agree exactly with
continuous visibility at a fixed grid resolution.

Figure~\ref{fig:bridge-field-2d} shows the same bridge geometry for a larger
source--target displacement.

\begin{figure*}[t]
	\centering
	\includegraphics[width=\textwidth]{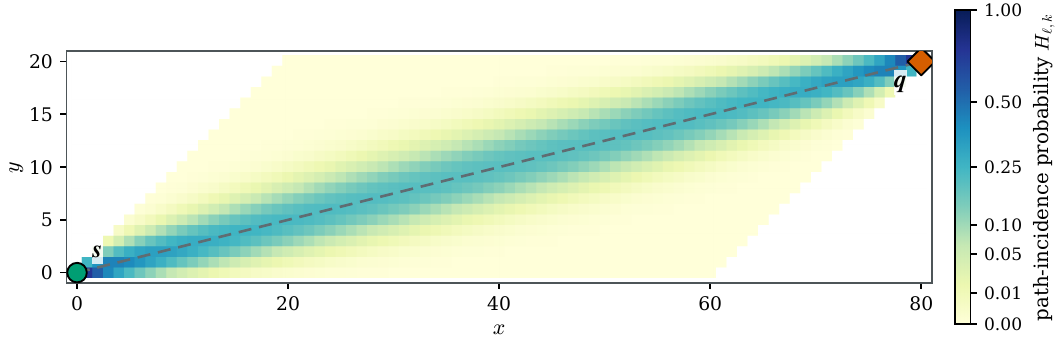}
	\caption{Endpoint-conditioned path-incidence field from
		$\boldsymbol{s}=(0,0)$ to $\boldsymbol{q}=(80,20)$, with $M=80$ and
		$R=20$.  Each coloured grid cell represents its exact probability
		$H_{\ell,k}$.  The dashed source--target segment passes through the mean
		layer location in Eq.~\eqref{eq:bridge-mean-cell}.  Colour intensity is
		scaled by $\sqrt{H_{\ell,k}}$ to keep low-probability cells visible, while
		the colour bar reports $H_{\ell,k}$ itself.  The widening interior support
		and unit masses at the two endpoints show the bridge's transverse spread
		and endpoint pinching.}
	\label{fig:bridge-field-2d}
\end{figure*}

\subsection{Consequences for transmission and thresholding}

For the fixed source-to-target family
$\mathcal{P}(\boldsymbol{s},\boldsymbol{q})$, an intermediate cell's
path-incidence probability directly weights the target visibility
$V(\boldsymbol{q})$ in the single-cell and single-layer transmission settings
derived below.  For a non-source cell $\boldsymbol{p}$, let
\begin{equation}
	A_{\boldsymbol{p}}
	=\{\pi\in\mathcal{P}(\boldsymbol{s},\boldsymbol{q}):
	\boldsymbol{p}\in\pi\},
	\qquad
	H_{\boldsymbol{p}}=\Pr(\Pi\in A_{\boldsymbol{p}}).
	\label{eq:path-incidence-event}
\end{equation}
Thus $A_{\boldsymbol{p}}$ is the event that the uniformly selected canonical
path $\Pi$ visits $\boldsymbol{p}$, and $H_{\boldsymbol{p}}$ is the fraction of
source-to-$\boldsymbol{q}$ canonical paths that do so.  The incidence
probability $H_{\boldsymbol{p}}$ is distinct from $V(\boldsymbol{p})$, which is
$V_s$ times the mean of the transmission products over the
source-to-$\boldsymbol{p}$ canonical path family
$\mathcal{P}(\boldsymbol{s},\boldsymbol{p})$.
For a cell $\boldsymbol{p}=\boldsymbol{q}_{\ell,k}$, this probability is
$H_{\ell,k}$.  Suppose every non-source cell is fully transmitting except
$\boldsymbol{p}$, whose transmission is $a\in[0,1]$.  Separating paths that
visit $\boldsymbol{p}$ from those that do not in
Eq.~\eqref{eq:structural-path-expectation} gives the exact identity
\begin{equation}
	V(\boldsymbol{q})
	=V_s\left[1-(1-a)H_{\boldsymbol{p}}\right].
	\label{eq:single-cell-incidence-response}
\end{equation}
More generally, if only one intermediate layer $\mathcal{L}$ contains nonunit
transmissions, every path visits exactly one cell on that layer and
\begin{equation}
	V(\boldsymbol{q})
	=V_s\sum_{\boldsymbol{p}\in\mathcal{L}}
	H_{\boldsymbol{p}}\tau(\boldsymbol{p}).
	\label{eq:heterogeneous-layer-response}
\end{equation}
Equations~\eqref{eq:single-cell-incidence-response} and
\eqref{eq:heterogeneous-layer-response} give the path-incidence probabilities a
direct transmission meaning.  When every other non-source cell is fully transmitting,
reducing $\tau(\boldsymbol{p})$ from one to $a$ reduces
$V(\boldsymbol{q})/V_s$ by exactly $(1-a)H_{\boldsymbol{p}}$ whenever $V_s>0$.
When nonunit transmissions are confined to one layer, the incidence
probabilities $H_{\boldsymbol{p}}$ are the exact weights by which that layer's
cell transmissions combine to produce $V(\boldsymbol{q})/V_s$.
In a general transmission field, the effect of changing one cell also depends
on the transmissions encountered with it along the same paths, so
path-incidence probability alone is not a universal sensitivity coefficient.

For a fixed binary map under the unit-source convention, let $B$ be its set of
blocked non-source cells.  A path contributes one exactly when it avoids every
cell in $B$, and hence
\begin{equation}
	V(\boldsymbol{q})
	=1-\Pr\!\left(\bigcup_{\boldsymbol{p}\in B}
	A_{\boldsymbol{p}}\right).
	\label{eq:blocker-union-identity}
\end{equation}
The union is the event that the uniformly selected canonical path $\Pi$ visits
at least one blocked cell, so Eq.~\eqref{eq:blocker-union-identity} is the
probability that $\Pi$ avoids every blocked cell.
For a single blocked cell $\boldsymbol{p}$,
Eq.~\eqref{eq:blocker-union-identity} reduces to
$V(\boldsymbol{q})=1-H_{\boldsymbol{p}}$, the fraction of canonical paths that
avoid $\boldsymbol{p}$.  If all blockers lie on one layer, their path events
are disjoint because every path reaches exactly one cell on that layer, so the
blocked probability in Eq.~\eqref{eq:blocker-union-identity} is the sum of
their incidence probabilities.  Events on different layers can overlap along
the same path and must not in general be added.

Under the threshold convention established in Section~\ref{sec:recurrence}, a
target is visible when $V(\boldsymbol{q})\geq\theta$.  If the fixed binary map
leaves $C$ of the
$|\mathcal{P}(\boldsymbol{s},\boldsymbol{q})|$ canonical paths clear, this rule
is equivalent to
\begin{equation}
	C\geq\left\lceil
	\theta|\mathcal{P}(\boldsymbol{s},\boldsymbol{q})|
	\right\rceil.
	\label{eq:threshold-clear-path-count}
\end{equation}
At $\theta=1/2$, at least half of the canonical paths must remain clear, and an
exact tie is visible.  For blockers confined to one layer,
Eq.~\eqref{eq:blocker-union-identity} makes the target invisible exactly when
the sum of their incidence probabilities is strictly greater than $1-\theta$.
With nonbinary cell transmissions, however, the same threshold acts on the
mean of the path transmission products rather than on a count of clear paths.
In particular, at finite resolution, classifying the target as visible when
$V(\boldsymbol{q})\geq 1/2$ need not match the continuous-visibility
classification of the straight source--target segment.

The path-incidence probabilities arise from uniform selection within the fixed
canonical path family, not from uncertainty in the map.  If a map instead
supplies jointly distributed transmission variables $T(\boldsymbol{p})$, write $V(T)$ for the
scalar visibility value obtained from one realization of that field.  Under
the unit-source convention, writing
$\mathcal{P}=\mathcal{P}(\boldsymbol{s},\boldsymbol{q})$ gives
\begin{equation}
	\mathbb{E}_{\mathrm{map}}[V(T)]
	=\frac{1}{|\mathcal{P}|}
	\sum_{\pi\in\mathcal{P}}
	\mathbb{E}_{\mathrm{map}}\!\left[
		\prod_{\boldsymbol{p}\in\pi}T(\boldsymbol{p})\right].
	\label{eq:random-map-path-moments}
\end{equation}
Equation~\eqref{eq:random-map-path-moments} states that the expected target
visibility over map realizations is the average, over the canonical path
family, of the expected product of the cell transmissions along each path.
The transmitting-source convention includes $T(\boldsymbol{s})$ once in every
product.  Replacing each $T(\boldsymbol{p})$ by its individual mean
$\mathbb{E}_{\mathrm{map}}[T(\boldsymbol{p})]$ is exact for each path only when
the expected product of the transmissions along that path equals the product
of their individual means.  This equality is the required factorization.
Mutual independence along each path is sufficient but not necessary.  The role
of map correlation is clear even for two cells $\boldsymbol{p}_1$ and
$\boldsymbol{p}_2$ on an otherwise fully transmitting axis path.  Let their
transmissions be perfectly correlated,
$T(\boldsymbol{p}_1)=T(\boldsymbol{p}_2)=Z$, where $Z$ equals one with
probability $1/2$ and zero otherwise.  The resulting path transmission has
expected value $\mathbb{E}[Z^2]=1/2$, whereas multiplying their individual
means gives $\mathbb{E}[Z]^2=1/4$.  Thus averaging the visibility values
computed from realized binary maps gives $1/2$, whereas evaluating the
recurrence once on the marginal-mean field gives $1/4$.  For this correlated
map model, only the former is the expected visibility; uniform selection within
the canonical path family neither prescribes nor removes the correlation
responsible for the difference.

In two dimensions, the single prefix count $K_{\ell}$ completely identifies
the cell reached on layer $\ell$.  A three-dimensional canonical path has two
transverse-advance schedules, so a layer cell is instead identified by two
prefix counts.  The multidimensional counterpart of
Eq.~\eqref{eq:hypergeometric-bridge-mass} is therefore the next question: what
joint distribution governs those counts, and why do the probabilities of the
four target-entry cases equal the bilinear predecessor coefficients?

\section{Three-Dimensional Product Bridge and Extension to \texorpdfstring{\(d\)}{d} Dimensions}
\label{sec:multidimensional-extension}

A 3D canonical path answers this question by carrying one transverse-advance
schedule for each of the two transverse coordinates.  Both schedules range
over the same $M$ layers and may select the same layer.  Their paired
choices define the 3D canonical path family, their prefix counts define the 3D
product bridge, and their choices on the final path step determine the
predecessor through which the path enters the target and thereby explain the
four bilinear coefficients.

\subsection{Paired schedules and the 3D product bridge}

Consider a 3D target
$\boldsymbol{q}=\boldsymbol{s}+h\boldsymbol{n}$ distinct from the source, and
retain the dominant-coordinate index $m$, layer count $M=|n_m|$, signs
$\sigma_j$, and ratios $p_j=|n_j|/M$ from
Section~\ref{sec:recurrence}.  Let
\begin{equation}
	\mathcal{R}_3=\{1,2,3\}\setminus\{m\}
	=\{r_1,r_2\},
	\qquad r_1<r_2,
	\label{eq:3d-transverse-index-set}
\end{equation}
be the set of transverse-coordinate indices labelled in coordinate order, and put
$R_j=|n_j|$ for $j\in\mathcal{R}_3$.  For each transverse coordinate, choose
a schedule
\begin{equation}
	S_j\subseteq\{1,\ldots,M\},
	\qquad |S_j|=R_j,
	\qquad j\in\mathcal{R}_3.
	\label{eq:3d-transverse-schedules}
\end{equation}
The schedule $S_j$ records the layers on which coordinate $j$ advances in
addition to the dominant coordinate.  An index may belong to neither schedule,
either schedule alone, or both schedules, so one layer step may change one,
two, or three coordinates.

Write $\boldsymbol{S}=(S_{r_1},S_{r_2})$ for a schedule pair.  After $\ell$
ordered path steps, define
\begin{equation}
	K_{j,\ell}
	=\left|S_j\cap\{1,\ldots,\ell\}\right|,
	\qquad j\in\mathcal{R}_3,
	\label{eq:3d-prefix-counts}
\end{equation}
so that the cell reached is
\begin{equation}
	\boldsymbol{q}_{\ell}(\boldsymbol{S})
	=\boldsymbol{s}+h\left(
	\ell\sigma_m\boldsymbol{e}_m
	+\sum_{j\in\mathcal{R}_3}
	K_{j,\ell}\sigma_j\boldsymbol{e}_j
	\right),
	\qquad 0\leq\ell\leq M.
	\label{eq:canonical-3d-path-point}
\end{equation}
As in 2D, the ordered sequence
$\pi_{\boldsymbol{S}}=(\boldsymbol{q}_1(\boldsymbol{S}),\ldots,
	\boldsymbol{q}_M(\boldsymbol{S}))$ forms a canonical path that excludes the
source and includes the target.  Schedule pairs and canonical paths correspond
one-to-one.  Each coordinate advances only in its source-to-target direction,
so every path remains inside the box spanned by its endpoints.  The 3D
canonical path family is therefore
\begin{equation}
	\begin{gathered}
		\mathcal{P}(\boldsymbol{s},\boldsymbol{q})
		=\left\{\pi_{\boldsymbol{S}}:
		S_j\subseteq\{1,\ldots,M\},\ |S_j|=R_j
		\text{ for }j\in\mathcal{R}_3\right\},\\
		|\mathcal{P}(\boldsymbol{s},\boldsymbol{q})|
		=\binom{M}{R_{r_1}}\binom{M}{R_{r_2}}.
	\end{gathered}
	\label{eq:canonical-3d-path-family}
\end{equation}
The product is essential.  It counts two choices made over the same $M$
layers, whereas a multinomial coefficient would assign each layer to one of
several mutually exclusive step types and would incorrectly forbid the two
transverse coordinates from advancing together.

Select a schedule pair uniformly from this Cartesian product.  The two
schedules are then selected independently, each conditioned on its required
endpoint count.  For
$\boldsymbol{k}=(k_1,k_2)$, the probability that the corresponding canonical
path reaches the cell with
$K_{r_1,\ell}=k_1$ and $K_{r_2,\ell}=k_2$ is
\begin{equation}
	\begin{gathered}
		H^{(3)}_{\ell,\boldsymbol{k}}
		:=\Pr\!\left(
		K_{r_1,\ell}=k_1,\ K_{r_2,\ell}=k_2\right) \\
		{}=\prod_{a=1}^{2}
		\frac{\binom{\ell}{k_a}
			\binom{M-\ell}{R_{r_a}-k_a}}
		{\binom{M}{R_{r_a}}}.
	\end{gathered}
	\label{eq:3d-product-bridge-mass}
\end{equation}
For each $a\in\{1,2\}$, the feasible count satisfies
\begin{equation}
	\max\{0,R_{r_a}-(M-\ell)\}
	\leq k_a\leq
	\min\{\ell,R_{r_a}\}.
	\label{eq:3d-product-bridge-support}
\end{equation}
The support is the Cartesian product of these two intervals.  Each factor in
Eq.~\eqref{eq:3d-product-bridge-mass} is the hypergeometric bridge derived in
Section~\ref{sec:hypergeometric-bridge}, so the product sums to one on every
layer.  We call this endpoint-conditioned joint distribution the \emph{3D
	product bridge}.  At $\ell=0$ it is concentrated at
$\boldsymbol{k}=(0,0)$, and at $\ell=M$ it is concentrated at
$\boldsymbol{k}=(R_{r_1},R_{r_2})$, giving endpoint pinching at the source and
target.  If $R_j$ is zero or $M$, the corresponding coordinate factor is a
point mass on every layer.

The product structure has two immediate geometric consequences.  For
$j\in\mathcal{R}_3$,
\begin{equation}
	\mathbb{E}[K_{j,\ell}]=\ell p_j,
	\qquad
	\operatorname{Cov}(K_{r_1,\ell},K_{r_2,\ell})=0,
	\label{eq:3d-product-bridge-moments}
\end{equation}
and mapping the mean counts through
Eq.~\eqref{eq:canonical-3d-path-point} gives
\begin{equation}
	\mathbb{E}\!\left[
		\boldsymbol{q}_{\ell}(\boldsymbol{S})\right]
	=\boldsymbol{s}+\frac{\ell}{M}
	(\boldsymbol{q}-\boldsymbol{s}).
	\label{eq:3d-product-bridge-mean-cell}
\end{equation}
For $M>1$, each coordinate variance is the 2D variance in
Eq.~\eqref{eq:hypergeometric-bridge-variance} with $R$ replaced by $R_j$.
When $M=1$, both variances are zero.  The coordinate variances quantify how the
cell visited by the selected path varies in the two transverse coordinates on
each intermediate layer.  The two transverse schedules are selected
independently because the schedule pair is selected uniformly from the
Cartesian product of the two admissible schedule sets.  This independence
concerns only canonical path selection and does not imply that cell
transmissions or occupancies in the map are independent.
Nor does the mean alignment imply exact agreement between the thresholded field
and continuous visibility at a fixed resolution.

\subsection{Target-entry probabilities and the bilinear update}

The same paired schedules explain the local 3D update.  For
$\boldsymbol{\epsilon}=(\epsilon_1,\epsilon_2)\in\{0,1\}^2$, let
$\epsilon_a=1$ mean that schedule $S_{r_a}$ selects step $M$.  Thus
$\boldsymbol{\epsilon}$ records whether neither transverse
coordinate, either coordinate alone, or both coordinates advance on the final
path step into the target.  The number of schedule pairs that produce a
specified target-entry case is
\begin{equation}
	\prod_{a=1}^{2}
	\binom{M-1}{R_{r_a}-\epsilon_a},
	\label{eq:3d-final-step-count}
\end{equation}
where a binomial coefficient is zero outside its natural range.  Dividing by
Eq.~\eqref{eq:canonical-3d-path-family} gives
\begin{equation}
	w_{\boldsymbol{\epsilon}}
	=\prod_{a=1}^{2}
	\omega_{r_a}(\epsilon_a),
	\qquad
	\omega_j(0)=1-p_j,
	\quad \omega_j(1)=p_j.
	\label{eq:3d-final-step-probability}
\end{equation}
These are probabilities within the canonical path family conditioned on its
fixed source and target.  They are not map probabilities or unconstrained
movement probabilities.

The predecessor for target-entry case $\boldsymbol{\epsilon}$ is
\begin{equation}
	\boldsymbol{q}_{\boldsymbol{\epsilon}}
	=\boldsymbol{q}-h\sigma_m\boldsymbol{e}_m
	-h\sum_{a=1}^{2}
	\epsilon_a\sigma_{r_a}\boldsymbol{e}_{r_a}.
	\label{eq:3d-final-step-predecessor}
\end{equation}
Consequently, the four cases and their probabilities are
\begin{center}
	\small
	\begin{tabular}{@{}ll@{}}
		\toprule
		\shortstack[l]{Transverse coordinates advancing\\into the target}
		& Path-family proportion   \\
		\midrule
		neither                                          & $(1-p_{r_1})(1-p_{r_2})$ \\
		$r_1$ only                                       & $p_{r_1}(1-p_{r_2})$     \\
		$r_2$ only                                       & $(1-p_{r_1})p_{r_2}$     \\
		both                                             & $p_{r_1}p_{r_2}$         \\
		\bottomrule
	\end{tabular}
\end{center}
These proportions are exactly the four weights in the bilinear interpolant of
Eq.~\eqref{eq:3d-interpolant}.  Figure~\ref{fig:predecessor-face-3d} shows the
correspondence for one asymmetric target.

\begin{figure*}[t]
	\centering
	\includegraphics[width=0.94\textwidth]{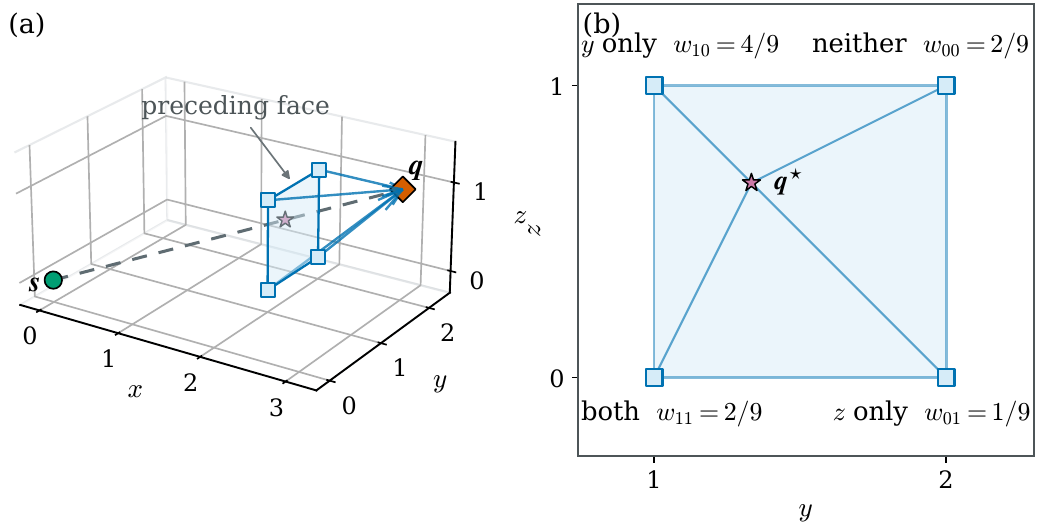}
	\caption{Four target-entry cases for the $x$-dominant target
		$\boldsymbol{q}=(3,2,1)$ with source $\boldsymbol{s}=(0,0,0)$ and
		$h=1$.  Every path enters the target from the preceding face $x=2$ on its
		final path step.  On that step, the path may additionally advance in
		neither transverse coordinate,
		$y$ only, $z$ only, or both.  The endpoint-conditioned proportions are
		obtained from $p_y=2/3$ and $p_z=1/3$ and equal, respectively,
		$2/9$, $4/9$, $1/9$, and $2/9$.  The dashed source--target segment meets
		the preceding face at the bilinear interpolation point
		$\boldsymbol{q}^{\star}=(2,4/3,2/3)$.}
	\label{fig:predecessor-face-3d}
\end{figure*}

To connect these local target-entry proportions to the full 3D canonical path
family in Eq.~\eqref{eq:canonical-3d-path-family}, define its finite-path
average by
\begin{equation}
	V_{\mathcal{P}}^{(3)}(\boldsymbol{q})
	=\frac{V_s}
	{\binom{M}{R_{r_1}}\binom{M}{R_{r_2}}}
	\sum_{\boldsymbol{S}}
	\mathcal{T}(\pi_{\boldsymbol{S}}),
	\qquad
	V_{\mathcal{P}}^{(3)}(\boldsymbol{s})=V_s,
	\label{eq:canonical-3d-path-average}
\end{equation}
where the sum ranges over the schedule pairs in
Eq.~\eqref{eq:3d-transverse-schedules}, and $\mathcal{T}$ is the transmission
product defined in Eq.~\eqref{eq:path-transmission-product}.

\begin{theorem}[Three-dimensional path-average identity]
	\label{thm:exact-3d-path-representation}
	\leavevmode\par\noindent
	For any source and target in a rectangular three-dimensional grid, any
	$V_s\in[0,1]$, and any deterministic transmission field
	$\tau:\mathcal{G}_h\to[0,1]$, the dominant-axis visibility recurrence in
	Eq.~\eqref{eq:visibility-recurrence} satisfies
	$V(\boldsymbol{q})=V_{\mathcal{P}}^{(3)}(\boldsymbol{q})$.
\end{theorem}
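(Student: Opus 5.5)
The proof follows the 2D argument of Theorem~\ref{thm:exact-2d-path-representation}, using induction on the layer index $M$. The coincident case $\boldsymbol{q}=\boldsymbol{s}$ is the prescribed boundary value, and both $V$ and $V_{\mathcal{P}}^{(3)}$ equal $V_s$ there. For $M>0$, the plan is to show that $V_{\mathcal{P}}^{(3)}$ satisfies the same recurrence~\eqref{eq:visibility-recurrence} with the four-parent interpolant~\eqref{eq:3d-interpolant}. Causality (the recurrence invariants proposition) then lets us substitute the induction hypothesis at every nonzero-weight predecessor, all of which lie on layer $M-1$.

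First, partition the schedule pairs $\boldsymbol{S}$ according to the target-entry case $\boldsymbol{\epsilon}\in\{0,1\}^2$, where $\epsilon_a=1$ exactly when $M\in S_{r_a}$. Within case $\boldsymbol{\epsilon}$, the map $S_{r_a}\mapsto S_{r_a}\setminus\{M\}$ is a bijection onto the subsets of $\{1,\ldots,M-1\}$ of size $R_{r_a}-\epsilon_a$. This gives the count~\eqref{eq:3d-final-step-count}, and a case whose lower index is out of range is empty.

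Next, every path in case $\boldsymbol{\epsilon}$ passes through the cell $\boldsymbol{q}_{\boldsymbol{\epsilon}}$ of~\eqref{eq:3d-final-step-predecessor} at step $M-1$. Deleting the target from such a path should therefore give a bijection onto $\mathcal{P}(\boldsymbol{s},\boldsymbol{q}_{\boldsymbol{\epsilon}})$. Each transmission product then factors as $\tau(\boldsymbol{q})$ times the product along the truncated path. Dividing each case count by $|\mathcal{P}(\boldsymbol{s},\boldsymbol{q})|$ gives the weights~\eqref{eq:3d-final-step-probability}. This yields
\[
V_{\mathcal{P}}^{(3)}(\boldsymbol{q})=\tau(\boldsymbol{q})\sum_{\boldsymbol{\epsilon}}w_{\boldsymbol{\epsilon}}V_{\mathcal{P}}^{(3)}(\boldsymbol{q}_{\boldsymbol{\epsilon}}),
\]
which is the recurrence with $V$ replaced by $V_{\mathcal{P}}^{(3)}$. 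Induction on $M$ then completes the proof.

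The main obstacle is the bijection step, because the predecessor's own canonical family is defined relative to its own dominant coordinate, and that coordinate can differ from $m$. Two things need checking. Since $|n_m|-1=M-1\ge R_j-\epsilon_j$ for every transverse $j$, coordinate $m$ is still a maximal coordinate of $\boldsymbol{q}_{\boldsymbol{\epsilon}}$, so the layer count is $M-1$. However, a transverse coordinate may now tie with $m$. The case $R_j=M$, $\epsilon_j=1$ is one example, and so is $R_j=M-1$, $\epsilon_j=0$, which has zero weight only if $R_j=M$; the latter tie is in fact possible with nonzero weight.

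In every tie, the predecessor's family must be shown to be independent of which tied coordinate is declared dominant. If coordinates $m$ and $j$ both have magnitude $M-1$, then $S_j$ must be all of $\{1,\ldots,M-1\}$. Hence, when either tied coordinate is taken as dominant, the other's schedule is the full index set. The set of paths is identical, and only the labelling of schedules changes. The same argument covers face and space diagonals. I would state this as a short lemma, mirroring the remark in the 2D proof. With it in hand, the truncated family is $\mathcal{P}(\boldsymbol{s},\boldsymbol{q}_{\boldsymbol{\epsilon}})$ as given by~\eqref{eq:canonical-3d-path-family}, and the recurrence step goes through.
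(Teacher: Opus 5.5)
Your proposal is correct and matches the paper's proof. Like the paper, you partition the schedule pairs by the target-entry vector $\boldsymbol{\epsilon}$, map each nonempty case bijectively onto the canonical family of $\boldsymbol{q}_{\boldsymbol{\epsilon}}$ by deleting the target, factor out $\tau(\boldsymbol{q})$ to obtain the four-parent recurrence for $V_{\mathcal{P}}^{(3)}$, and induct on $M$, relying on the path family being unchanged when a tied dominant coordinate is relabelled. You additionally check ties that first appear at the predecessor (such as $R_j=M-1$ with $\epsilon_j=0$, which has positive weight) explicitly. The paper covers that case only through its general remark that exchanging a tied dominant coordinate leaves the path family and the nonzero predecessor terms unchanged, so your treatment is slightly more careful.
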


\begin{proof}
	For $M>0$, partition the schedule pairs in
	Eq.~\eqref{eq:canonical-3d-path-average} by
	$\boldsymbol{\epsilon}$.  Removing the target from every path in a target-entry
	case $\boldsymbol{\epsilon}$ with $w_{\boldsymbol{\epsilon}}>0$
	leaves the canonical path family ending at the predecessor
	$\boldsymbol{q}_{\boldsymbol{\epsilon}}$ in
	Eq.~\eqref{eq:3d-final-step-predecessor}.  The fraction of all canonical paths
	belonging to this target-entry case is
	$w_{\boldsymbol{\epsilon}}$ by
	Eqs.~\eqref{eq:3d-final-step-count}--\eqref{eq:3d-final-step-probability},
	and every path product contains the common target factor
	$\tau(\boldsymbol{q})$.  Hence
	\begin{equation}
		V_{\mathcal{P}}^{(3)}(\boldsymbol{q})
		=\tau(\boldsymbol{q})
		\sum_{\boldsymbol{\epsilon}\in\{0,1\}^2}
		w_{\boldsymbol{\epsilon}}
		V_{\mathcal{P}}^{(3)}
		(\boldsymbol{q}_{\boldsymbol{\epsilon}}).
		\label{eq:3d-path-average-recurrence}
	\end{equation}
	If several coordinates tie at magnitude $M$, every tied coordinate treated
	as transverse has $R_j=M$ and therefore advances on every layer.  Exchanging
	which tied coordinate is called dominant leaves each layer step, the path
	family, and the nonzero predecessor terms unchanged.  Every nonzero
	predecessor in Eq.~\eqref{eq:3d-path-average-recurrence} is consequently on
	layer $M-1$, even when the fixed coordinate-order tie rule selects a
	different dominant coordinate for the predecessor than for
	$\boldsymbol{q}$.  The source-cell base case is $V_s$, and induction over $M$
	proves the equality.
	\qed
\end{proof}

Theorem~\ref{thm:exact-3d-path-representation} establishes that the visibility
recurrence evaluates $V_s$ times the transmission-product
average over the same endpoint-fixed canonical path family that defines the 3D
product bridge.  The 3D product
bridge gives, on every layer after source initialization, the probability that
a uniformly selected source-to-target canonical path passes through each cell
in the bridge's support.
On the immediately preceding layer, the probability of each target-entry case
is exactly the corresponding bilinear coefficient.  The causal dynamic
program computes $V(\boldsymbol{q})$ through local reuse,
without enumerating the
$\binom{M}{R_{r_1}}\binom{M}{R_{r_2}}$ complete paths.  The finite-path
interpretation of the four-parent recurrence does not claim that
bilinear interpolation or a 3D causal visibility sweep is itself new.

The same path-average identity gives deterministic nonbinary transmissions a
concrete 3D meaning.  Consider a unit-spaced grid with source
$\boldsymbol{s}=(5,1,6)$ and target $\boldsymbol{q}=(1,3,3)$.  The corresponding
product bridge contains $24$ canonical paths, which can pass through the six
cells shown in Fig.~\ref{fig:nonbinary-layer-3d} on the intermediate layer
$x=3$.  Let $\mathcal{L}$ denote these six cells, and let every other
non-source cell be fully transmitting.  Each path visits exactly one cell in
$\mathcal{L}$.  For $V_s>0$, Eq.~\eqref{eq:heterogeneous-layer-response}
therefore gives
\[
	\frac{V(\boldsymbol{q})}{V_s}
	=\sum_{\boldsymbol{p}\in\mathcal{L}}
	H_{\boldsymbol{p}}\tau(\boldsymbol{p})
	=\frac{3}{4}.
\]
Thus $H_{\boldsymbol{p}}$ is exactly the weight of
$\tau(\boldsymbol{p})$ in the ratio $V(\boldsymbol{q})/V_s$, not a probability
that the map is free.

\begin{figure*}[t]
	\centering
	\includegraphics[width=\textwidth]{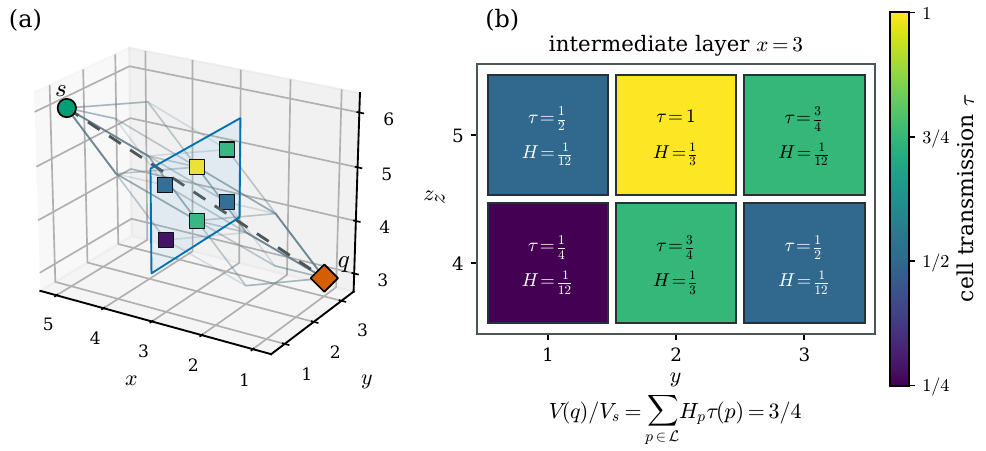}
	\caption{Bridge-weighted response of one partially transmitting 3D layer.
		(a) The product bridge contains $24$ canonical paths from
		$\boldsymbol{s}=(5,1,6)$ to $\boldsymbol{q}=(1,3,3)$.
		(b) At $x=3$, cell colour and the upper label give the deterministic
		transmission $\tau$, while the lower label gives the path-incidence
		probability $H$.  Every path visits exactly one of the six cells, so the
		incidence probabilities weight the layer transmissions to produce
		$V(\boldsymbol{q})/V_s=3/4$.}
	\label{fig:nonbinary-layer-3d}
\end{figure*}

Partially transmitting cells on different layers cannot generally be treated
separately because one path may visit several of them.  For the same
source--target family, let
$a=\tau(\boldsymbol{p})=1/4$ at $\boldsymbol{p}=(4,2,5)$ and
$b=\tau(\boldsymbol{r})=3/4$ at $\boldsymbol{r}=(2,3,4)$, with every other
non-source cell fully transmitting.  Writing the joint path-incidence
probability as
$H_{\boldsymbol{p},\boldsymbol{r}}
	=\Pr(\boldsymbol{p}\in\Pi\ \text{and}\ \boldsymbol{r}\in\Pi)$, where
$\Pi$ is selected uniformly from the $24$ paths, the exact incidences are
$H_{\boldsymbol{p}}=H_{\boldsymbol{r}}=3/8$ and
$H_{\boldsymbol{p},\boldsymbol{r}}=1/6$.  Averaging over the four possible
visit cases gives
\begin{equation}
	\frac{V(\boldsymbol{q})}{V_s}
	=1-(1-a)H_{\boldsymbol{p}}-(1-b)H_{\boldsymbol{r}}
	+(1-a)(1-b)H_{\boldsymbol{p},\boldsymbol{r}}
	=\frac{21}{32}.
	\label{eq:two-cell-joint-response}
\end{equation}
Treating the cells separately means evaluating two transmission fields, each
with only one partially transmitting cell.
Equation~\eqref{eq:single-cell-incidence-response} gives
$V(\boldsymbol{q})/V_s=23/32$ when that cell is $\boldsymbol{p}$ with
transmission $a=1/4$, and $V(\boldsymbol{q})/V_s=29/32$ when it is
$\boldsymbol{r}$ with transmission $b=3/4$.
Multiplying these two target responses would instead give
$(23/32)(29/32)=667/1024$.  That multiplication replaces the joint
path-incidence probability $1/6$ by
$H_{\boldsymbol{p}}H_{\boldsymbol{r}}=9/64$.  The mismatch therefore comes from
treating the two path-incidence events as independent, not from map uncertainty:
the transmission field is deterministic, and the probability refers only to
uniform selection from the fixed canonical path family.  The recurrence
therefore gives the exact average of the complete
path-transmission products over the declared canonical path family, including
the joint contribution $H_{\boldsymbol{p},\boldsymbol{r}}$ from paths that visit
both partially transmitting cells.  This average does not assume that the two
visit events are independent, and it is neither a general occupancy probability
nor a model of optical effects beyond the prescribed cell attenuation.  The
next subsection extends the finite-path average and bridge construction to
any fixed spatial dimension.

\subsection{Product construction in \texorpdfstring{\(d\)}{d} dimensions}

The 3D argument contains no dimension-specific step beyond the number of
transverse schedules.  For any $d\geq1$, consider a target
$\boldsymbol{q}=\boldsymbol{s}+h\boldsymbol{n}$ distinct from the source and
extend the source-relative definitions of Section~\ref{sec:recurrence} to $d$
coordinates.  Choose the smallest coordinate index $m$ satisfying $|n_m|=M$,
and define the transverse-coordinate set
\begin{equation}
	\mathcal{R}_d=\{1,\ldots,d\}\setminus\{m\}.
	\label{eq:fixed-d-transverse-index-set}
\end{equation}
For every $j\in\mathcal{R}_d$, independently and uniformly select a schedule
$S_j\subseteq\{1,\ldots,M\}$ of size $R_j=|n_j|$.  Write
$\boldsymbol{S}=(S_j)_{j\in\mathcal{R}_d}$ and
$K_{j,\ell}=|S_j\cap\{1,\ldots,\ell\}|$.  The cell reached on layer $\ell$ is
\begin{equation}
	\boldsymbol{q}_{\ell}(\boldsymbol{S})
	=\boldsymbol{s}+h\left(
	\ell\sigma_m\boldsymbol{e}_m
	+\sum_{j\in\mathcal{R}_d}
	K_{j,\ell}\sigma_j\boldsymbol{e}_j
	\right),
	\qquad 0\leq\ell\leq M.
	\label{eq:fixed-d-path-point}
\end{equation}
The $M$ cells reached for $1\leq\ell\leq M$ form the canonical path
$\pi_{\boldsymbol{S}}$.  The path uniquely recovers its schedule tuple.  The
cardinality of the \(d\)-dimensional canonical path family and the
product-bridge probability at a prefix-count vector
$\boldsymbol{k}=(k_j)_{j\in\mathcal{R}_d}$ are, respectively,
\begin{equation}
	|\mathcal{P}_d(\boldsymbol{s},\boldsymbol{q})|
	=\prod_{j\in\mathcal{R}_d}\binom{M}{R_j},
	\qquad
	H^{(d)}_{\ell,\boldsymbol{k}}
	=\prod_{j\in\mathcal{R}_d}
	\frac{\binom{\ell}{k_j}
		\binom{M-\ell}{R_j-k_j}}
	{\binom{M}{R_j}}.
	\label{eq:fixed-d-cardinality-and-bridge}
\end{equation}
For each $j\in\mathcal{R}_d$, the feasible prefix count satisfies
\[
	\max\{0,R_j-(M-\ell)\}
	\leq k_j\leq
	\min\{\ell,R_j\}.
\]
The support of $H^{(d)}_{\ell,\boldsymbol{k}}$ is the Cartesian product of
these intervals, and $H^{(d)}_{\ell,\boldsymbol{k}}=0$ outside that support.
Empty products equal one, so $d=1$ gives the unique signed axis path.  For a
target-entry vector
$\boldsymbol{\epsilon}=(\epsilon_j)_{j\in\mathcal{R}_d}
	\in\{0,1\}^{d-1}$, which records the transverse coordinates that advance into
the target, the corresponding probability and predecessor are
\begin{equation}
	\begin{gathered}
		w_{\boldsymbol{\epsilon}}
		=\prod_{j\in\mathcal{R}_d}
		\omega_j(\epsilon_j),                        \\
		\boldsymbol{q}_{\boldsymbol{\epsilon}}
		=\boldsymbol{q}-h\sigma_m\boldsymbol{e}_m
		-h\sum_{j\in\mathcal{R}_d}
		\epsilon_j\sigma_j\boldsymbol{e}_j.
	\end{gathered}
	\label{eq:fixed-d-final-step-law}
\end{equation}

Define the \(d\)-dimensional finite-path average by
\begin{equation}
	V_{\mathcal{P}}^{(d)}(\boldsymbol{q})
	=\frac{V_s}{|\mathcal{P}_d(\boldsymbol{s},\boldsymbol{q})|}
	\sum_{\boldsymbol{S}}\mathcal{T}(\pi_{\boldsymbol{S}}),
	\qquad
	V_{\mathcal{P}}^{(d)}(\boldsymbol{s})=V_s,
	\label{eq:fixed-d-path-average}
\end{equation}
where the sum ranges over all schedule tuples above.  The value at a
coincident source and target is therefore supplied by the single empty path.

\begin{theorem}[Path-average identity in \(d\) dimensions]
	\label{thm:fixed-d-product-bridge}
	\leavevmode\par\noindent
	For any $d\geq1$, any source and target in a $d$-dimensional rectangular
	grid, any $V_s\in[0,1]$, and any deterministic transmission field
	$\tau$ taking values in $[0,1]$ on that grid, the probabilities in the
	product bridge of Eq.~\eqref{eq:fixed-d-cardinality-and-bridge} sum to one on
	every layer when $\boldsymbol{q}\ne\boldsymbol{s}$.  With
	$V(\boldsymbol{s})=V_s$, the
	\(d\)-dimensional dominant-axis recurrence
	\begin{equation}
		V(\boldsymbol{q})
		=\tau(\boldsymbol{q})
		\sum_{\boldsymbol{\epsilon}\in\{0,1\}^{d-1}}
		w_{\boldsymbol{\epsilon}}
		V(\boldsymbol{q}_{\boldsymbol{\epsilon}})
		\qquad(\boldsymbol{q}\ne\boldsymbol{s})
		\label{eq:fixed-d-visibility-recurrence}
	\end{equation}
	satisfies
	$V(\boldsymbol{q})=V_{\mathcal{P}}^{(d)}(\boldsymbol{q})$
	and $0\leq V(\boldsymbol{q})\leq1$.
	The computed visibility value and the nonzero predecessors with their weights
	are invariant under the choice of a tied dominant coordinate.  If
	\begin{equation}
		\nu
		=\bigl|\{j\in\mathcal{R}_d:0<p_j<1\}\bigr|,
		\label{eq:fractional-transverse-count}
	\end{equation}
	then exactly $2^\nu$ of the $2^{d-1}$ nominal choices give distinct
	predecessors with nonzero weight.
\end{theorem}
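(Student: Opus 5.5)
The proof has four parts, taken in order: normalization of the product bridge, the path-average identity by induction on $M$, tie invariance, and the count of nonzero predecessors.

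\emph{Normalization.} For fixed $\ell$, the probability $H^{(d)}_{\ell,\boldsymbol{k}}$ factors over $j\in\mathcal{R}_d$. Each factor is the 2D hypergeometric bridge of Eq.~\eqref{eq:hypergeometric-bridge-mass} with $R=R_j$, and Section~\ref{sec:hypergeometric-bridge} shows that it sums to one over its support. Alternatively, Vandermonde's identity $\sum_k\binom{\ell}{k}\binom{M-\ell}{R_j-k}=\binom{M}{R_j}$ gives the same conclusion. The support is the Cartesian product of the coordinate supports, so summing the product over it gives a product of ones. When $d=1$ the product is empty and equals one.

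\emph{Path-average identity.} I would mirror the proof of Theorem~\ref{thm:exact-3d-path-representation} and induct on $M$. The base case $M=0$ is the source, where the value $V_s$ is prescribed. For $M>0$, I would partition the schedule tuples by $\boldsymbol{\epsilon}$, where $\epsilon_j=[M\in S_j]$.

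\begin{itemize}
\item The number of tuples with a given $\boldsymbol{\epsilon}$ is $\prod_j\binom{M-1}{R_j-\epsilon_j}$.
\item Dividing by $|\mathcal{P}_d|$ and using Eq.~\eqref{eq:final-step-path-proportions} coordinatewise gives $w_{\boldsymbol{\epsilon}}$.
\item Deleting the target from each path in the class is a bijection onto the tuples $(S_j\setminus\{M\})_j$. These have sizes $R_j-\epsilon_j$ in $\{1,\ldots,M-1\}$, and so they index the canonical family of $\boldsymbol{q}_{\boldsymbol{\epsilon}}$, which has displacement $(M-1)$ in coordinate $m$ and $R_j-\epsilon_j$ in coordinate $j$.
\item Factoring out $\tau(\boldsymbol{q})$ then gives $V^{(d)}_{\mathcal{P}}(\boldsymbol{q})=\tau(\boldsymbol{q})\sum_{\boldsymbol{\epsilon}}w_{\boldsymbol{\epsilon}}V^{(d)}_{\mathcal{P}}(\boldsymbol{q}_{\boldsymbol{\epsilon}})$. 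Classes with $w_{\boldsymbol{\epsilon}}=0$ are empty and contribute nothing.
\end{itemize}

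Once this recurrence holds, induction gives equality with $V$. Bounds follow as in the recurrence-invariants proposition: the weights are nonnegative and sum to one by the product identity, and $\tau\in[0,1]$.

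\emph{Main obstacle: ties.} The delicate step is justifying that the family indexed by $(S_j\setminus\{M\})$ is the canonical family of $\boldsymbol{q}_{\boldsymbol{\epsilon}}$ \emph{as defined with $\boldsymbol{q}_{\boldsymbol{\epsilon}}$'s own dominant coordinate}. The predecessor has layer $M-1$, but its smallest-index maximal coordinate may differ from $m$. This happens when some $j$ with $R_j=M-1$ and $\epsilon_j=0$, or some $j$ with $R_j=M$, ties with $m$ at $M-1$.

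I would therefore prove a lemma first. If coordinates $m$ and $j$ both have magnitude $M$, then $S_j=\{1,\ldots,M\}$ is forced, so on every step both coordinates advance. Swapping the roles of $m$ and $j$ yields the identical set of cell sequences and the identical tuple structure for the other coordinates. The family, the bridge, and $V^{(d)}_{\mathcal{P}}$ are therefore independent of the choice among tied coordinates.

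The same lemma gives the stated invariance of the recurrence: every tied coordinate has $p_j=1$, so it contributes the factor $\omega_j(\epsilon_j)=[\epsilon_j=1]$ whichever tied coordinate is called dominant. The nonzero predecessors and their weights are therefore unchanged, and so is the computed value.

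\emph{Counting predecessors.} A nonzero $w_{\boldsymbol{\epsilon}}$ forces the following choices:
\begin{itemize}
\item $\epsilon_j=0$ whenever $p_j=0$;
\item $\epsilon_j=1$ whenever $p_j=1$;
\item either value is allowed, with positive weight, when $0<p_j<1$.
\end{itemize}
This gives exactly $2^\nu$ nonzero labels. They yield distinct points because $\boldsymbol{\epsilon}\mapsto\boldsymbol{q}_{\boldsymbol{\epsilon}}$ is injective, since each $\sigma_j\ne0$ when $p_j>0$.
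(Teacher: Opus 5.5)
Your proposal is correct and follows the paper's proof essentially step for step. The shared steps are: product normalization of the bridge; partitioning the schedule tuples by the target-entry vector $\boldsymbol{\epsilon}$, with the target-deletion bijection and the factoring of $\tau(\boldsymbol{q})$; the forcing argument for $p_j\in\{0,1\}$ that yields exactly $2^\nu$ distinct predecessors; tie invariance because tied coordinates advance on every layer; and induction on $M$. Your explicit lemma covers the case where a predecessor's own smallest-index dominant coordinate differs from $m$, which the paper handles in a single sentence.
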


\begin{proof}
	Uniform selection from the Cartesian product of schedule families gives the
	product bridge in Eq.~\eqref{eq:fixed-d-cardinality-and-bridge}.
	The product-bridge probabilities sum to one because the probabilities in each
	one-coordinate factor sum to one.
	Partitioning the schedule tuples by their target-entry vector
	$\boldsymbol{\epsilon}$ gives
	Eq.~\eqref{eq:fixed-d-final-step-law}.  Removing the target maps the paths in each
	target-entry case $\boldsymbol{\epsilon}$ with $w_{\boldsymbol{\epsilon}}>0$
	bijectively to the canonical path family of its predecessor.  Factoring the
	common target transmission out of every path product, as in
	Theorem~\ref{thm:exact-3d-path-representation}, shows that
	$V_{\mathcal{P}}^{(d)}$ obeys
	Eq.~\eqref{eq:fixed-d-visibility-recurrence} with every occurrence of $V$
	replaced by $V_{\mathcal{P}}^{(d)}$.

	If $p_j=0$, nonzero weight forces $\epsilon_j=0$.  If $p_j=1$, it forces
	$\epsilon_j=1$.  If $0<p_j<1$, both choices have positive weight.  The
	$\nu$ transverse ratios satisfying $0<p_j<1$ therefore produce exactly $2^\nu$
	distinct nonzero predecessors.  Every tied transverse coordinate has
	$p_j=1$ and is forced to advance on each layer, so exchanging which tied
	coordinate is designated dominant leaves the canonical path family and every
	nonzero predecessor with its weight unchanged.  Because all tied coordinates
	are forced to advance, every
	nonzero predecessor lies on layer $M-1$ and inside the source--target box.
	Induction over $M$, starting
	from the common source-cell base case $V_s$, therefore proves
	$V=V_{\mathcal{P}}^{(d)}$.  Finally, the nonnegative weights sum to one, so
	$0\leq\tau,V_s\leq1$ implies $0\leq V\leq1$.
	\qed
\end{proof}

For $d=2$ and $d=3$, the \(d\)-dimensional path-average identity reduces,
respectively, to the two-parent and four-parent identities in
Theorems~\ref{thm:exact-2d-path-representation}
and~\ref{thm:exact-3d-path-representation}.  The \(d\)-dimensional product
construction therefore
yields a single path-average identity rather than separate formulas for each
dimension.  The result supplies a finite-path meaning for the established
multilinear stencil.  It does not claim a new interpolation rule.  For a
complete field with $N_v$ returned grid values,
Eq.~\eqref{eq:fixed-d-visibility-recurrence} uses at most $2^{d-1}$ local terms
per value and therefore requires $O(2^{d-1}N_v)$ work.  Writing the returned
field already costs $\Omega(N_v)$ operations, so the work is $\Theta(N_v)$
when $d$ is fixed, with an exponential dimension-dependent stencil constant.

The exact path-average identities do not by themselves establish agreement
with continuous visibility under grid refinement.  To relate the two, we
use the bridge law to show that a uniformly selected canonical path
concentrates around the straight source--target segment as the grid spacing
decreases.

\subsection{Agreement with continuous visibility under grid refinement}
\label{sec:continuous-visibility-refinement}

Fix a rectangular physical domain $\mathcal{D}\subset\mathbb{R}^d$, a closed
obstacle set $\mathcal{O}\subset\mathcal{D}$, and distinct physical endpoints
$\bar{\boldsymbol{s}},\bar{\boldsymbol{q}}\in
\mathcal{D}\setminus\mathcal{O}$.  Consider a nested family of cubic grids
with spacing $h\to0$.  Let $\boldsymbol{s}_h$ and
$\boldsymbol{q}_h$ be the centres of the cells containing the two physical
endpoints.  If an endpoint lies on a shared cell boundary, select one of the
adjacent cells by a fixed rule.  Then
\[
	\|\boldsymbol{s}_h-\bar{\boldsymbol{s}}\|_\infty\leq\frac{h}{2},
	\qquad
	\|\boldsymbol{q}_h-\bar{\boldsymbol{q}}\|_\infty\leq\frac{h}{2}.
\]
A grid cell is blocked when its closed cell box intersects
$\mathcal{O}$.  Accordingly, the binary transmission field $\tau_h$ is zero
on blocked cells and one elsewhere.  Let
$V_h(\boldsymbol{q}_h)$ denote the scalar visibility value defined by the
recurrence in exact arithmetic, with prescribed source value
$V_h(\boldsymbol{s}_h)=1$ and no additional decay.

Write
$\boldsymbol{\gamma}(t)
=(1-t)\bar{\boldsymbol{s}}+t\bar{\boldsymbol{q}}$ for
$0\leq t\leq1$.  The physical query is continuously visible when the open
segment $\boldsymbol{\gamma}((0,1))$ does not intersect
$\mathcal{O}$.  Contact with the obstacle boundary counts as blocked.
For a fixed threshold $\theta\in(0,1)$, the recurrence instead classifies the
sampled target as visible when
$V_h(\boldsymbol{q}_h)\geq\theta$.

The refinement result covers two cases.  The segment has \emph{positive
clearance} when its $L^\infty$ distance from $\mathcal{O}$ is positive.
Alternatively, the segment enters the obstacle interior: for some
$t_0\in(0,1)$, an $L^\infty$ ball of positive radius around
$\boldsymbol{\gamma}(t_0)$ lies inside $\mathcal{O}$.  We call queries
satisfying either condition \emph{nongrazing}.  A segment that touches an
obstacle boundary but never enters any obstacle interior satisfies neither
condition.

\begin{theorem}[Nongrazing agreement under refinement]
	\label{thm:nongrazing-refinement-consistency}
	\leavevmode\par\noindent
	In any fixed dimension $d\geq1$, consider the refinement setting above.
	For all sufficiently small $h$, the sampled source and target cells are
	free.  If the physical source--target segment has positive clearance, then
	\[
		V_h(\boldsymbol{q}_h)\longrightarrow1
		\qquad\text{as }h\to0.
	\]
	If the segment enters the obstacle interior, then
	\[
		V_h(\boldsymbol{q}_h)\longrightarrow0
		\qquad\text{as }h\to0.
	\]
	Consequently, for every fixed threshold $\theta\in(0,1)$, the thresholded
	recurrence classification agrees with continuous visibility on every
	sufficiently fine grid in either case.
\end{theorem}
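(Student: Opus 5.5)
The plan is to combine the $d$-dimensional path-average identity (Theorem~\ref{thm:fixed-d-product-bridge}) with a concentration estimate for the product bridge. Under the unit-source convention with binary $\tau_h$, that identity gives
\[
V_h(\boldsymbol{q}_h)=\Pr\bigl(\Pi_h\text{ avoids every blocked cell}\bigr),
\]
where $\Pi_h$ is uniform on $\mathcal{P}_d(\boldsymbol{s}_h,\boldsymbol{q}_h)$.

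\textbf{Endpoint cells.} The physical endpoints lie in the open set $\mathcal{D}\setminus\mathcal{O}$, since $\mathcal{O}$ is closed. Each therefore has positive $L^\infty$ distance from $\mathcal{O}$. Once $h$ is below that distance, the closed cell box of side $h$ containing the endpoint misses $\mathcal{O}$, so both sampled cells are free. In addition, $\boldsymbol{s}_h\ne\boldsymbol{q}_h$ for small $h$. Since $M_h=\|\boldsymbol{q}_h-\boldsymbol{s}_h\|_\infty/h\to\infty$ with $M_h h\to\|\bar{\boldsymbol{q}}-\bar{\boldsymbol{s}}\|_\infty=:L>0$, the case $d=1$ is covered by the same argument below with no transverse coordinates.

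\textbf{Uniform concentration.} Let $\boldsymbol{\gamma}_h(t)=(1-t)\boldsymbol{s}_h+t\boldsymbol{q}_h$. Then $\|\boldsymbol{\gamma}_h-\boldsymbol{\gamma}\|_\infty\le h/2$ uniformly in $t$. By Eq.~\eqref{eq:fixed-d-path-point}, the layer-$\ell$ cell differs from $\boldsymbol{\gamma}_h(\ell/M)$ only in the transverse coordinates, by $h\,|K_{j,\ell}-\ell p_j|$. I would bound, for each $j$,
\[
\max_{0\le\ell\le M}|K_{j,\ell}-\ell p_j|.
\]
The first thing to try is Chebyshev applied to the bridge variance in Eq.~\eqref{eq:hypergeometric-bridge-variance}, which is at most $M/4$. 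This handles one layer. For a maximum over all layers, a union bound over $M$ layers combined with Chebyshev fails, so this is the main obstacle. I would resolve it in one of two ways:
\begin{itemize}
\item[(i)] Hoeffding's inequality for sampling without replacement gives $\Pr(|K_{j,\ell}-\ell p_j|\ge\lambda)\le 2e^{-2\lambda^2/\ell}$. A union bound over the $M$ layers and the $d-1$ coordinates, with $\lambda=M^{3/4}$, then yields failure probability $2(d-1)M e^{-2\sqrt{M}}\to0$.
\item[(ii)] Alternatively, $K_{j,\ell}-\ell p_j$ rescaled as $(K_{j,\ell}-\ell p_j)/(M-\ell)$ is a martingale, which opens Doob's inequality.
\end{itemize}
Either way, with probability $1-o(1)$ every cell of $\Pi_h$ lies within $L^\infty$ distance $hM^{3/4}+h/2$ of the segment $\boldsymbol{\gamma}$, and this distance tends to $0$. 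Its cell box then lies within $\delta_h:=hM^{3/4}+h\to0$.

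\textbf{Positive clearance.} Let $c>0$ be the clearance. Once $\delta_h<c$, every cell box on a concentrated path misses $\mathcal{O}$. Hence $V_h\ge 1-o(1)$, and together with $V_h\le1$ this gives $V_h\to1$. The threshold conclusion follows because $V_h\ge\theta$ eventually, and positive clearance implies the query is continuously visible.

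\textbf{Interior entry.} Suppose the $L^\infty$ ball of radius $\rho$ around $\boldsymbol{\gamma}(t_0)$ lies in $\mathcal{O}$, with $t_0\in(0,1)$. Choose the layer $\ell^\ast=\lfloor t_0M\rfloor$, so that $0<\ell^\ast<M$ for large $M$. Then $\boldsymbol{\gamma}_h(\ell^\ast/M)$ is within $h+h/2$ of $\boldsymbol{\gamma}(t_0)$; the dominant-coordinate offset from flooring is at most $hL/h\cdot(1/M)\cdot M$, which is $O(h)$. On the concentration event, the layer-$\ell^\ast$ cell centre therefore lies within $\delta_h+O(h)<\rho$ of $\boldsymbol{\gamma}(t_0)$. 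Its box meets $\mathcal{O}$, so the cell is blocked and that path contributes zero. Thus $V_h\le o(1)\to0$, and $V_h<\theta$ eventually, which matches the continuous verdict "blocked". Combining the two cases gives agreement on every sufficiently fine grid.
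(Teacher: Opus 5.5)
Your proposal is correct and follows essentially the paper's argument: the path-average identity turns $V_h(\boldsymbol{q}_h)$ into the probability that a uniform canonical path avoids blocked cells. Hoeffding's inequality for sampling without replacement, with a union bound over the $M_h$ layers and $d-1$ transverse coordinates, then confines the path to a tube around the segment, after which you treat the clearance and interior-ball cases as the paper does. The only differences are cosmetic: you use a shrinking deviation $\lambda=M_h^{3/4}$ (tube width $hM_h^{3/4}\to0$) where the paper fixes a physical radius $r<\delta/3$, and your garbled flooring remark should simply say that $\|\boldsymbol{\gamma}_h(\ell^\ast/M_h)-\boldsymbol{\gamma}_h(t_0)\|_\infty\leq\|\boldsymbol{q}_h-\boldsymbol{s}_h\|_\infty/M_h=h$.
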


\begin{proof}
	Because $\tau_h$ is binary and the prescribed source value is one,
	Theorem~\ref{thm:fixed-d-product-bridge} identifies
	$V_h(\boldsymbol{q}_h)$ with the probability that a uniformly selected
	canonical path encounters no blocked cell.

	Write
	\[
		\boldsymbol{q}_h-\boldsymbol{s}_h
		=h\boldsymbol{n}_h,
		\qquad
		M_h=\|\boldsymbol{n}_h\|_\infty,
	\]
	so that $M_h\geq1$ for all sufficiently small $h$, and define
	$\boldsymbol{\gamma}_h(t)=(1-t)\boldsymbol{s}_h+t\boldsymbol{q}_h$.
	For a uniformly selected
	canonical path, let $\boldsymbol{X}_{h,\ell}$ be the centre of the cell
	reached after $\ell$ path steps.  In each transverse coordinate $j$, let
	$K_{j,\ell}$ count the advances completed by layer $\ell$.  This count
	follows the hypergeometric bridge established above, with mean
	$\ell|n_{h,j}|/M_h$.  Consequently,
	$\mathbb{E}[\boldsymbol{X}_{h,\ell}]
	=\boldsymbol{\gamma}_h(\ell/M_h)$.

	For $1\leq\ell\leq M_h$ and each transverse coordinate, Hoeffding's
	inequality for sampling without replacement
	\cite{Hoeffding1963ProbabilityInequalities} gives
	\[
		\Pr\!\left(
		\left|K_{j,\ell}-\frac{\ell|n_{h,j}|}{M_h}\right|
		\geq\frac{r}{h}\right)
		\leq 2\exp\!\left(-\frac{2r^2}{h^2\ell}\right)
		\leq 2\exp\!\left(-\frac{2r^2}{h^2M_h}\right).
	\]
	The endpoints are deterministic.  A union bound over the transverse
	coordinates and layers therefore gives, for every fixed $r>0$,
	\begin{equation}
		\Pr\!\left(
			\max_{0\leq\ell\leq M_h}
			\left\|
				\boldsymbol{X}_{h,\ell}
				-\boldsymbol{\gamma}_h(\ell/M_h)
			\right\|_\infty
			\geq r
		\right)
		\leq
		2(d-1)M_h
		\exp\!\left(-\frac{2r^2}{h^2M_h}\right).
		\label{eq:canonical-path-concentration}
	\end{equation}
	Coordinates tied for dominance advance on every layer and introduce no
	additional deviation.  For $d=1$, the path is deterministic and the
	deviation probability is zero.  Since the physical endpoints are distinct,
	$hM_h\to
	\|\bar{\boldsymbol{q}}-\bar{\boldsymbol{s}}\|_\infty>0$.
	The right-hand side of
	Eq.~\eqref{eq:canonical-path-concentration} therefore tends to zero.
	Thus a uniformly selected canonical path remains inside every fixed-width
	tube around the sampled straight segment with probability approaching one.

	Since $\mathcal{O}$ is closed and the two physical endpoints lie outside
	it, each endpoint has positive distance from $\mathcal{O}$.  Their sampled
	cells are therefore free for all sufficiently small $h$.

	Suppose first that the physical segment has positive clearance
	\[
		\delta
		=\operatorname{dist}_\infty
		\bigl(\boldsymbol{\gamma}([0,1]),\mathcal{O}\bigr)>0.
	\]
	The endpoint-sampling bounds imply
	\[
		\max_{0\leq t\leq1}
		\|\boldsymbol{\gamma}_h(t)-\boldsymbol{\gamma}(t)\|_\infty
		\leq\frac{h}{2}.
	\]
	Choose $0<r<\delta/3$.  A cell extends at most $h/2$ from its centre in
	$L^\infty$ distance.  Thus every point of a cell whose centre lies within
	distance $r$ of the sampled segment is at most $r+h$ from the physical
	segment.  For sufficiently small $h$, we have $r+h<\delta$, so every such
	cell is free.  Equation~\eqref{eq:canonical-path-concentration} then gives
	$V_h(\boldsymbol{q}_h)\to1$.

	Suppose instead that the segment enters the obstacle interior.  Choose
	$t_0\in(0,1)$ and $\delta>0$ such that the $L^\infty$ ball of radius
	$\delta$ around $\boldsymbol{\gamma}(t_0)$ lies inside
	$\mathcal{O}$.  Let $\ell_h$ be a layer index nearest to $t_0M_h$.
	Then
	\[
		\boldsymbol{\gamma}_h(\ell_h/M_h)
		\longrightarrow\boldsymbol{\gamma}(t_0).
	\]
	Applying Eq.~\eqref{eq:canonical-path-concentration} with
	$0<r<\delta/3$ shows that the cell centre reached on layer $\ell_h$ lies
	inside that ball with probability approaching one.
	That cell is blocked, so the probability that a canonical path remains
	clear tends to zero.  Therefore $V_h(\boldsymbol{q}_h)\to0$.

	Since $\theta\in(0,1)$ is fixed, these limits imply
	$V_h(\boldsymbol{q}_h)\geq\theta$ in the clear case and
	$V_h(\boldsymbol{q}_h)<\theta$ in the blocked case for all sufficiently
	small $h$.
	\qed
\end{proof}

The theorem guarantees eventual agreement for each fixed nongrazing query,
not exact classification at a prescribed grid spacing.  It does not cover
segments that meet an obstacle boundary without entering any obstacle
interior.  Nor does it select a preferred threshold or provide a convergence
rate uniform over all queries.  Finite-resolution errors remain a separate
validation question.

\section{Numerical Methods and Implementations}
\label{sec:implementations}

The recurrence in Eq.~\eqref{eq:fixed-d-visibility-recurrence} defines each
non-source visibility value locally.  A full-field computation must order those
local dependencies and evaluate them once for every grid cell.  This section
makes that dynamic program explicit, separates exact-arithmetic verification
from full-volume computation, and shows how the four-parent NumPy, OpenMP, and
CUDA implementations preserve the same 3D recurrence.
The main consequence is computational: the finite-path average can be
evaluated without enumerating every member of the canonical path family, whose
size can grow exponentially with the source--target grid distance, while
the causal layers expose parallel work without changing the mathematical
model.  The section closes by specifying the transmission field and initial
visibility value used by each implementation and the scalar field each returns,
checking numerical agreement among the NumPy, OpenMP, and CUDA implementations,
and introducing the three alternative methods used to compare classification
error, boundary displacement, and CPU throughput in
Section~\ref{sec:validation}.

\subsection{Causal full-field algorithm}
\label{sec:causal-full-field-algorithm}

Consider a rectangular grid \(\mathcal{G}_h\) with \(N_v\) cells, transmission
field \(\tau\), source \(\boldsymbol{s}\), and initial visibility value \(V_s\).
A \emph{full-field computation} returns \(V(\boldsymbol{q})\) for every
\(\boldsymbol{q}\in\mathcal{G}_h\) for that fixed source.  Define the layer index
of a cell by
\[
	L(\boldsymbol{q})
	=\frac{\|\boldsymbol{q}-\boldsymbol{s}\|_{\infty}}{h},
\]
which is the largest source-relative coordinate displacement measured in grid
steps, and let \(L_{\max}=\max_{\boldsymbol{q}\in\mathcal{G}_h}
L(\boldsymbol{q})\).  The fixed coordinate order used below selects the first
maximum, namely \(x\), then \(y\), then \(z\) in 3D.  This fixed tie-breaking
order gives each cell one update even when two or more displacements are tied.

Algorithm~\ref{alg:causal-full-field} is the complete source-to-all-cells
calculation.  The visualizations in
Figure~\ref{fig:visibility-3d-showcase} are derived from fields returned by
this calculation.  It invokes the predecessors and weights already defined in
Eq.~\eqref{eq:fixed-d-final-step-law}; it does not introduce another local
visibility rule.

\begin{algorithm*}[t]
	\caption{Causal evaluation of the dominant-axis visibility field}
	\label{alg:causal-full-field}
	\begin{algorithmic}[1]
		\Require rectangular grid \(\mathcal{G}_h\), transmission \(\tau\), source
		\(\boldsymbol{s}\), and initial visibility value \(V_s\)
		\Ensure visibility field \(V\) on \(\mathcal{G}_h\)
		\State \(V(\boldsymbol{q})\gets0\) for every
		\(\boldsymbol{q}\in\mathcal{G}_h\); \(V(\boldsymbol{s})\gets V_s\)
		\For{\(\ell=1,\ldots,L_{\max}\)}
		\ForAll{\(\boldsymbol{q}\in\mathcal{G}_h\) with
			\(L(\boldsymbol{q})=\ell\)}
		\State \(m\gets\min\!\left\{j\in\{1,\ldots,d\}:
		|q_j-s_j|=\|\boldsymbol{q}-\boldsymbol{s}\|_\infty\right\}\)
		\State obtain the nonzero
		\((\boldsymbol{q}_{\boldsymbol{\epsilon}},
		w_{\boldsymbol{\epsilon}})\) pairs from
		Eq.~\eqref{eq:fixed-d-final-step-law} using \(m\)
		\State \(V(\boldsymbol{q})\gets\tau(\boldsymbol{q})
		\displaystyle\sum_{w_{\boldsymbol{\epsilon}}>0}
		w_{\boldsymbol{\epsilon}}
		V(\boldsymbol{q}_{\boldsymbol{\epsilon}})\)
		\EndFor
		\EndFor
		\State \Return \(V\)
	\end{algorithmic}
\end{algorithm*}

Every nonzero predecessor of a layer-\(\ell\) cell lies on layer
\(\ell-1\), as established in Theorem~\ref{thm:fixed-d-product-bridge}.
Consequently, all required values are available before a layer begins, cells
within that layer have no mutual dependencies, and each cell writes only its
own visibility value.  A synchronization point between consecutive layers is
therefore sufficient for a parallel implementation.  The ordering is also valid
on axes and tied diagonals because only nonzero-weight predecessors enter the
update.

\subsection{OpenMP and CUDA full-volume implementations}
\label{sec:parallel-full-volume-realizations}

In 3D, the OpenMP and CUDA implementations follow
Algorithm~\ref{alg:causal-full-field}, processing the layers in increasing
\(L\), outward from the source.  The cells on each layer form the boundary of
an axis-aligned box centred at the source and clipped by the volume boundary.
The OpenMP and CUDA implementations both divide that boundary into at most six
disjoint faces.  The two \(x\)-faces retain all their edges and corners.  The
\(y\)-faces omit cells
already owned by an \(x\)-face, and the \(z\)-faces omit cells already owned by either an
\(x\)- or \(y\)-face.  The resulting face assignment applies the fixed
\(x\), \(y\), \(z\) tie order: every layer cell belongs to
exactly one face, including axis-aligned, face-diagonal, and space-diagonal
cells.

The OpenMP implementation uses a persistent-shell schedule, which keeps one set
of CPU threads active throughout the complete outward sweep.  On each layer,
the disjoint face partition prevents cells from being assigned twice, while a
fixed work assignment distributes each face's cells among the available
threads.
Each cell is written by one thread, which accumulates up to four predecessor
contributions in a fixed local order.  Because every predecessor lies on the
completed preceding layer, the threads can evaluate the current layer
independently, and the next outward layer begins only after all of its cells
have been computed.  CPU parallelism therefore changes who evaluates a cell,
not the four-parent update or outward layer order that determines its
visibility value.

The CUDA implementation uses a fused schedule that combines all owned faces of
an outward layer in one kernel launch and assigns one GPU thread to each face
cell.  Successive kernel launches follow the outward order, ensuring that every
preceding layer is complete before its values are read.  The transmission and visibility fields
remain in GPU memory throughout a solve.  The CPU and GPU implementations
therefore differ in how they assign independent cells to hardware, not in the
recurrence they evaluate.

Algorithm~\ref{alg:causal-full-field} visits each non-source cell once.  When
\(d\) is fixed, each visit applies a bounded local update, so the work is
\(\Theta(N_v)\), while the transmission and visibility fields require
\(\Theta(N_v)\) storage.  The sequential dependency depth is \(L_{\max}\) in
every dimension, although the cells within each layer may be evaluated in
parallel.  Once the transmission grid has been supplied, the work scales with
\(N_v\), the number of cells in the returned visibility field, rather than with
the number or polygonal complexity of obstacle boundaries.  These are
algorithmic cost and storage results; Section~\ref{sec:validation} reports
measured runtime, throughput, host and device memory, and transfer costs under
declared experimental conditions.

\subsection{Recurrence verification and numerical agreement}
\label{sec:numerical-interface-agreement}

The software for the dominant-axis recurrence serves two distinct purposes.
Exact-arithmetic evaluators compute the recurrence in 2D, 3D, and any fixed
spatial dimension.  These evaluators are correctness tools, not performance
implementations.  The four-parent 3D
recurrence also has three single-precision full-volume implementations: the
NumPy implementation exposes the layer grouping and predecessor operations
directly, the OpenMP implementation provides optimized CPU computation, and
the CUDA implementation provides optimized GPU computation.

The four-parent OpenMP and CUDA implementations exchange rectangular 3D fields
as contiguous single-precision arrays, with consecutive \(z\)-indices stored
adjacently, followed by \(y\) and then \(x\).  All reported experiments use the
unit-source convention \(V_s=1\) and no additional decay.  The NumPy and OpenMP
implementations also support the transmitting-source convention defined in
Section~\ref{sec:recurrence}.  An optional uniform per-layer decay factor
\(\alpha\in[0,1]\) is incorporated by replacing
\(\tau(\boldsymbol{q})\) with \(\alpha\tau(\boldsymbol{q})\) in
Eq.~\eqref{eq:visibility-recurrence} and in the non-source update of
Algorithm~\ref{alg:causal-full-field}.  When \(\tau\equiv1\), this gives
\(V(\boldsymbol{q})=V_s\alpha^{L(\boldsymbol{q})}\).
The OpenMP and CUDA implementations return the scalar visibility field.
Thresholding is applied only after that field has been returned and is not part
of the recurrence.

The implementations are checked against the four-parent recurrence in two
stages.  On test cases with tractable path counts, visibility values returned
by the exact-arithmetic evaluators match the exhaustive average over the
canonical path family.  The
NumPy, OpenMP, and CUDA implementations then compute complete single-precision
fields independently.  Their floating-point operations may be ordered or
combined differently, so agreement between implementations is numerical
rather than necessarily bitwise.

For the six test scenes G0--G5 sampled on \(64^3\) grids, with cell transmissions
restricted to zero or one, the maximum absolute OpenMP--CUDA field difference
was \(5.96\times10^{-7}\).  For a separate \(32^3\) grid containing partially
transmitting cells, the maximum absolute difference between the visibility
fields computed by NumPy, OpenMP, and CUDA was below \(2.99\times10^{-7}\).
For these binary and partially transmitting inputs, the
OpenMP fields were bitwise identical across the tested thread counts and
repeated runs, while repeated CUDA runs also returned bitwise-identical fields.
Across the CUDA schedules tested in the larger study described in
Section~\ref{sec:performance-results}, the maximum absolute OpenMP--CUDA
difference was \(4.11\times10^{-6}\) for
\(N\in\{64,128,256,512\}\).  The fused CUDA comparison at \(N=1000\)
reached \(2.56\times10^{-6}\).  All comparisons passed their declared
numerical tolerances.

No implementation clamps the computed values.  The largest nonbinary value was
\(1+1.19\times10^{-7}\), which exceeds the exact upper bound \(V\leq1\)
established in Theorem~\ref{thm:fixed-d-product-bridge} only by an amount
consistent with single-precision roundoff.  The scalar fields are therefore
the primary agreement quantity: threshold labels may differ when roundoff
places two implementations' values on opposite sides of a declared threshold.

The accompanying repository provides compact reproducibility tests that check
the exact-arithmetic 2D recurrence against exhaustive finite-path enumeration
and compare the four-parent NumPy, OpenMP, and CUDA fields in 3D.  Together,
agreement with exhaustive canonical-path enumeration and bounded cross-implementation field
differences establish that the NumPy, OpenMP, and CUDA implementations preserve
the declared recurrence up to the reported single-precision differences.  This
implementation validation establishes that the three full-volume
implementations compute the intended four-parent field, but does not establish
that field's geometric accuracy or its performance relative to other methods.

\subsection{Comparison methods}
\label{sec:comparison-methods}

The evaluation compares the four-parent method with three alternative
grid-visibility methods.  Separate comparison software implements two methods
described by Goldstein et al.~\cite{Goldstein2022PathCounting}: an edge-gated
six-neighbour axial method and an edge-gated 26-neighbour tetrahedral method.
It also implements a controlled tetrahedral variant that keeps the tetrahedral
predecessor family but replaces each predecessor--target edge-visibility test
with a check that the destination cell is free.
Comparing this controlled variant with the edge-gated tetrahedral method
isolates the effect of predecessor--target edge-visibility tests while keeping
the tetrahedral predecessor family unchanged.

All four methods receive the same binary transmission field and source and
return scalar visibility fields under the unit-source convention, with no
additional decay.  The three alternatives differ from the four-parent
recurrence in their predecessor sets and local blocking tests, so agreement
with the four-parent field is neither expected nor used as an implementation
check.

\section{Numerical Validation and Computational Performance}
\label{sec:validation}

Recurrence correctness alone establishes neither agreement with obstacle
geometry nor computational cost.  We therefore evaluate the four-parent method
and the three comparison methods defined in
Section~\ref{sec:comparison-methods}.

The geometric evaluation begins with a prescribed source position and
continuous obstacle geometry.  Sampling this physical scene on a grid places
the source at a cell centre, which we call the \emph{grid source}, and marks
each obstacle-intersecting cell as blocked.  The resulting binary transmission
field and grid source are
supplied to all four methods.  Each method returns a scalar visibility field,
which a chosen threshold converts into a binary prediction.  The evaluation
compares that prediction with visibility in both the physical scene and the
binary input grid.

Subsection~\ref{sec:validation-design} defines the two comparison labels and the
thresholded method prediction, then specifies the six test scenes.
Subsection~\ref{sec:geometry-refinement-results} reports classification error
and boundary displacement under grid refinement, together with limitations
arising from threshold choice and near-contact sampling.
Subsection~\ref{sec:performance-results} compares the four methods' CPU
performance and reports schedule-dependent throughput and memory use.

\subsection{Visibility labels and test scenes}
\label{sec:validation-design}

All six scenes use the coordinate domain $[0,1]^3$, a cube of unit side length.
The prescribed source position and obstacle geometry remain fixed as the grid
is refined.  At resolution $N$, each coordinate is divided into $N$ cells of
width $h=1/N$.  The grid source is the centre of the cell containing the
prescribed source position.  A cell is blocked when the cell's closed cube
intersects an obstacle.  Blocked cells receive $\tau=0$, while all other cells
receive $\tau=1$.  Every method receives the same binary transmission field,
the same source cell, and no additional decay under the unit-source convention.

For each target in a free cell, the physical scene and grid input provide
separate binary visibility labels.  The \emph{continuous-visibility label}
tests the physical scene and marks a target as visible when the open straight
segment from the grid source $\boldsymbol{s}$ to the target centre
$\boldsymbol{q}$,
\[
	\bigl\{(1-t)\boldsymbol{s}+t\boldsymbol{q}:0<t<1\bigr\},
\]
does not intersect an obstacle.  The restriction $0<t<1$ excludes the source
and target endpoints.  Contact with an obstacle boundary counts as blocked.
The \emph{blocked-cell label} instead tests the binary input grid and marks the
target as visible when every non-source cell touched by the closed
source--target segment is free.  A disagreement between the two labels records
a difference between the physical scene and its blocked-cell representation
before the recurrence is evaluated.  Neither label specifies a numerical value
for $V(\boldsymbol{q})$.

Thresholding the returned scalar field produces a third binary classification:
a target is predicted visible when $V(\boldsymbol{q})\geq\theta$.  We report
five thresholds,
\[
	\theta\in\{1/10,1/4,1/2,3/4,9/10\},
\]
without selecting a preferred value.
Relative to the continuous-visibility label, the false-negative rate is the
fraction of continuously visible free targets predicted invisible, while the
false-positive rate is the fraction of continuously blocked free targets
predicted visible.

The blocked-cell label separates each error relative to continuous visibility
into two cases.  If a continuously visible target is predicted invisible and
the blocked-cell label also marks it as blocked, the prediction agrees with the
binary grid supplied to the solver.  If the blocked-cell label instead marks
the target as visible, the prediction disagrees with both comparison labels.
We call the second case a \emph{residual false negative}.  The same partition
applies to false positives.  The resulting partition identifies which
classifications coincide but does not establish what caused a disagreement.

The six scenes serve three experimental roles.  G0 provides the free-space base
case.  G1--G4 provide full-field refinement cases at
$N\in\{32,64,128,256\}$: an opaque box (G1), a thin obstacle (G2), a wall with a
narrow aperture (G3), and an obliquely viewed box (G4).  G5 tests near-contact
sensitivity using a source--target segment that touches
the box's closed boundary at an obstacle edge.  Shifting the target in one
coordinate by $-1/256$ makes the segment narrowly clear, while a shift of
$1/256$ makes it enter the box.  Because grid sampling places the endpoints at
cell centres at each resolution, G5 asks whether these prescribed near-contact
classifications survive sampling rather than measuring a full-field refinement
rate.  Figure~\ref{fig:validation-scene-g3} compares the G3 scalar
field at $N=64$ and its $\theta=1/2$ classification with continuous visibility.
Subsection~\ref{sec:geometry-refinement-results} then measures the G3 transition
width and boundary displacement as the grid is refined.

\begin{figure*}[t]
	\centering
	\includegraphics[width=\textwidth]{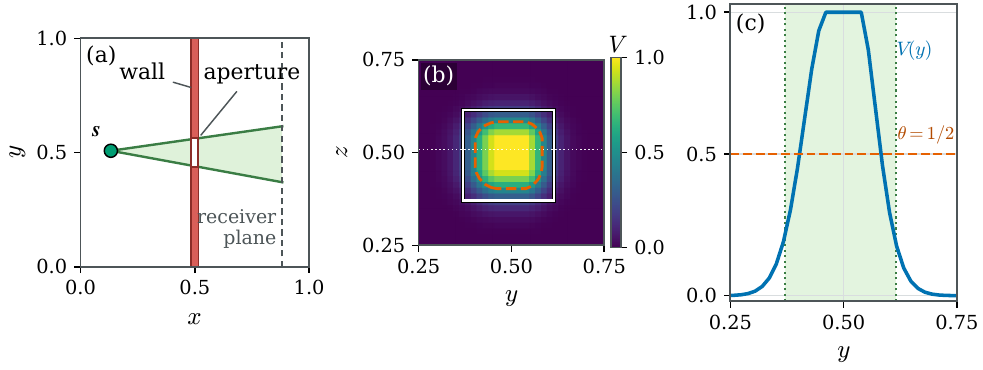}
	\caption{Narrow-aperture case G3 at $N=64$.  (a)~A central slice shows the
		grid source $\boldsymbol{s}=(17/128,65/128,65/128)$, wall aperture,
		receiver plane, and continuously visible region.
		(b)~Four-parent scalar visibility $V$ on the receiver plane at
		$x=(56+\tfrac12)/64=113/128$.
		The white outline marks the continuously visible region, the orange
		dashed contour marks $V=1/2$, and the dotted line identifies the receiver
		line at $z=(32+\tfrac12)/64=65/128$.  (c)~The blue curve gives $V$ along
		the receiver line, while the green band marks the continuously visible
		interval.  Along the receiver line, the sampled positions satisfying
		$V(y)\geq 1/2$ form the predicted visible interval.  Its endpoints are the
		first and last such positions.}
	\label{fig:validation-scene-g3}
\end{figure*}

\subsection{Accuracy under grid refinement}
\label{sec:geometry-refinement-results}

As $N\to\infty$ ($h\to0$),
Theorem~\ref{thm:nongrazing-refinement-consistency} gives eventual agreement with
continuous visibility, for any fixed threshold
$\theta\in(0,1)$, whenever a fixed source--target segment either remains
separated from every obstacle or crosses an obstacle interior.  The
experiments below instead measure how classification changes
as the same physical scenes are sampled on progressively finer grids.  Free
space supplies the base case: because no
obstacle interrupts a source--target segment in G0, every target should remain
visible.  All four methods satisfy this invariant at every tested resolution
and threshold.  With this base case established, G1--G4 test refinement in the
presence of obstacle boundaries.  At $\theta=1/2$, the false-negative rate
measured against the continuous-visibility label decreases at each refinement
for every method.  Because this rate also counts targets for which the
thresholded prediction agrees with the blocked-cell label rather than the
continuous-visibility label, Table~\ref{tab:geometry-refinement} reports the
residual false-negative rate.

\begin{table*}[t]
	\centering
	\small
	\caption{Residual false-negative rate at $\theta=1/2$, in percent.  Each
		entry is the total residual count over G1--G4 divided by the total number of
		continuously visible free targets in those scenes.}
	\label{tab:geometry-refinement}
	\begin{tabular}{@{}lrrrr@{}}
		\toprule
		Method                   & $N=32$   & $N=64$   & $N=128$  & $N=256$  \\
		\midrule
		Four-parent              & $0.0227$ & $0.0255$ & $0.0166$ & $0.0099$ \\
		Edge-gated six-neighbour & $0.0899$ & $0.0852$ & $0.0525$ & $0.0277$ \\
		Controlled tetrahedral   & $0.0043$ & $0.0036$ & $0.0008$ & $0.0005$ \\
		Edge-gated tetrahedral   & $0.5467$ & $0.3946$ & $0.2153$ & $0.1119$ \\
		\bottomrule
	\end{tabular}
\end{table*}

For all four methods, the pooled residual false-negative rate at $N=256$ is
lower than at $N=32$.  At $N=256$, the controlled tetrahedral variant gives
the lowest rate, $0.0005\%$, followed by the four-parent, edge-gated
six-neighbour, and edge-gated tetrahedral methods at $0.0099\%$, $0.0277\%$,
and $0.1119\%$, respectively.  This ordering describes the pooled G1--G4
result at $\theta=1/2$, not a scene- or threshold-independent ranking.

False positives follow a different refinement pattern.  Every method has zero
false positives at $N=32$ and $N=64$, but nonzero rates appear for some
methods at $N=128$.  At $N=256$, the pooled G1--G4 rates range from zero to
$0.0131\%$, and the largest individual scene--method rate is $0.0402\%$.
Every false-positive prediction disagrees with both comparison labels.  False
positives therefore remain rare at $\theta=1/2$, but do not decrease
monotonically as the grid is refined.  The edge-gated tetrahedral method
trades zero false positives for the largest residual false-negative rate,
while the controlled variant combines the smallest residual false-negative
rate with a negligible false-positive rate.

The residual rate counts misclassified targets but does not show how the
visibility boundary changes spatially.  The narrow-aperture scene shows the
spatial boundary behaviour directly.  At $N=64$,
Figure~\ref{fig:validation-scene-g3}(c) illustrates how the sampled four-parent
field changes gradually from values near one inside the aperture to values near
zero outside the aperture.  For each of the two aperture boundaries crossed by
the receiver line, linear interpolation between neighbouring samples locates
the positions where $V(y)=0.1$ and $V(y)=0.9$.  The distance between those
positions is the transition width.  Averaging the two transition widths gives
a contraction from $0.1132$ at $N=32$ to $0.04765$ at $N=256$.  Boundary
displacement instead compares the sampled intervals.  It is the distance
between each predicted endpoint and the corresponding endpoint of the
continuously visible interval.  At $N=256$, both endpoint displacements for
every method are at most $0.0078125=2h$, or two cell widths.  The narrower
transition and smaller displacement show that the fixed aperture is resolved
more sharply under refinement, while the remaining displacement rules out
equality with continuous visibility at finite resolution.

The threshold $\theta=1/2$ used to calculate the reported classification errors
and boundary displacements is illustrative rather than preferred.  Lowering
the threshold can reduce
false negatives at the cost of more false positives.  For the four-parent
method in G4 at $\theta=1/10$, the false-positive rates are $4.12\%$,
$7.26\%$, $7.88\%$, and $7.09\%$ at
$N\in\{32,64,128,256\}$, respectively.  The observed trade-offs therefore
support neither a preferred threshold nor a universal monotone error trend.

The final comparison concerns the three near-contact G5
queries introduced in Subsection~\ref{sec:validation-design}.  For the
prescribed narrowly clear query, cell-centre sampling shifts the segment enough
that both the continuous-visibility and blocked-cell labels report blockage
through $N=256$.  The disagreement with the originally prescribed clear
segment arises before any recurrence is evaluated and therefore cannot be
attributed to a visibility method.  G5 consequently marks the scope of
Theorem~\ref{thm:nongrazing-refinement-consistency}: it applies to fixed
segments with positive obstacle clearance or an intersection with obstacle
interior.  Boundary contact without an interior intersection remains outside
the theorem and depends on the declared sampling and boundary conventions.

\subsection{Runtime, throughput, memory, and method cost}
\label{sec:performance-results}

After comparing geometric accuracy, we turn to two questions about
computational cost.  The first compares the CPU performance of the four
visibility methods.  The second keeps the four-parent recurrence fixed and
measures how execution schedule, source position, and field transfers affect
CPU and GPU throughput.

Both studies run on a laptop with an Intel Core i9-13980HX CPU and an NVIDIA
GeForce RTX 4070 Laptop GPU.  The four-method comparison covers G0 and G4 at
$N\in\{64,128,256,512\}$.  The scheduling study combines the same two scenes
and four resolutions with four source positions in unit-cube coordinates: centre
$(1/2,1/2,1/2)$, near edge $(1/32,1/2,1/2)$, near corner
$(1/32,1/32,1/32)$, and asymmetric $(1/4,3/5,2/3)$.  As in
Section~\ref{sec:validation-design}, the source is placed at the centre of the
cell containing the prescribed position.  The resulting
$2\times4\times4=32$ scene--position--resolution cases form the schedule
comparison.  The centred and near-corner cases are also evaluated at $N=1000$.
In G4, the obstacle moves with the source so that their relative geometry
remains fixed.

All CPU measurements use eight OpenMP threads, each pinned to a distinct
physical CPU core.  All four methods use
single-precision fields, weights, and local accumulation.  In the four-method
comparison, the four methods are evaluated in a deterministically shuffled
order in each of seven timing rounds.  In every round, each method performs two
untimed warm-up solves followed by three timed solves, and its reported time is
the median of the resulting 21 measurements.  Each scheduling result is the
median of seven solves after two CPU or three CUDA warm-up solves.
Throughput is the number of returned visibility values divided by elapsed
time.  A transfer-inclusive CUDA measurement additionally counts one
transmission-field transfer to the GPU and one visibility-field transfer back
to the CPU.

Table~\ref{tab:native-method-performance} compares the four CPU
implementations on G4.

\begin{table*}[t]
	\centering
	\small
	\setlength{\tabcolsep}{3pt}
	\renewcommand{\arraystretch}{1.08}
	\caption{CPU performance of the four methods on G4 using the benchmark
		laptop.  Relative runtime is the runtime of each method divided by the
		four-parent runtime at $N=512$.}
	\label{tab:native-method-performance}
	\begin{tabular}{@{}lrrrrr@{}}
		\toprule
		                         & \multicolumn{4}{c}{Throughput (million values/s)}
		                         & \shortstack{Relative runtime                                                             \\at $N=512$} \\
		\cmidrule(lr){2-5}
		Method                   & $N=64$                                            & $N=128$ & $N=256$ & $N=512$ &        \\
		\midrule
		Four-parent              & $1036$                                            & $1490$  & $647$   & $543$   & $1.00$ \\
		Controlled tetrahedral   & $723$                                             & $887$   & $394$   & $411$   & $1.32$ \\
		Edge-gated six-neighbour & $398$                                             & $724$   & $434$   & $236$   & $2.30$ \\
		Edge-gated tetrahedral   & $395$                                             & $375$   & $184$   & $193$   & $2.81$ \\
		\bottomrule
	\end{tabular}
\end{table*}

The four-parent implementation has the highest throughput at every tested G4
resolution.  At $N=512$, the controlled tetrahedral, edge-gated six-neighbour,
and edge-gated tetrahedral implementations require $1.32$, $2.30$, and $2.81$
times its runtime, respectively.  The four-parent implementation also has the
highest throughput at every tested G0 resolution.  The edge-gated tetrahedral
method is slower than the controlled variant because it tests every active
predecessor--target edge for voxel intersection, whereas the controlled
variant applies only target-cell blocking.

The second study keeps the four-parent recurrence fixed and changes only how
each outward layer is scheduled.  The persistent-shell CPU schedule retains
one OpenMP thread team throughout the sweep, whereas the earlier CPU schedule
creates a new team for every layer.  The octant CPU schedule instead groups
cells by their source-centred octant.  On the GPU, the fused schedule processes
an entire layer in one kernel rather than launching one kernel for each owned
face.  Table~\ref{tab:native-schedule-performance} compares the baseline and
revised schedules and the effects of source position.

\begin{table*}[t]
	\centering
	\small
	\setlength{\tabcolsep}{2pt}
	\renewcommand{\arraystretch}{1.08}
	\caption{Four-parent scheduling performance on the benchmark laptop.
		The schedule-change results are geometric means over the 32
		scene--position--resolution cases defined above.
		Transfer-inclusive CUDA values count one transmission-field and one
		visibility-field transfer.}
	\label{tab:native-schedule-performance}
	\textit{Schedule comparison: geometric mean (billion values/s)}
	\par\smallskip
	\begin{tabular}{@{}lrrr@{}}
		\toprule
		\shortstack[l]{Execution scope                    \\and schedule change}
		 & Earlier schedule & Revised schedule & Gain     \\
		\midrule
		\shortstack[l]{OpenMP solve                       \\thread team per layer $\rightarrow$ persistent shell}
		 & $0.745$          & $0.797$          & $6.9\%$  \\
		\shortstack[l]{CUDA solve                         \\face kernels $\rightarrow$ fused layer}
		 & $2.40$           & $4.28$           & $78.6\%$ \\
		\shortstack[l]{CUDA with transfers                \\face kernels $\rightarrow$ fused layer}
		 & $0.829$          & $0.996$          & $20.2\%$ \\
		\bottomrule
	\end{tabular}

	\medskip
	\textit{Source-position comparison at $N=1000$ (billion values/s)}
	\par\smallskip
	\begin{tabular}{@{}lrrrr@{}}
		\toprule
		Field and source
		                & \shortstack{Persistent-shell                               \\CPU}
		                & \shortstack{Octant                                         \\CPU}
		                & \shortstack{CUDA                                           \\solve}
		                & \shortstack{CUDA with                                      \\transfers} \\
		\midrule
		G0, centre      & $0.542$                      & $1.657$ & $2.284$ & $0.906$ \\
		G4, centre      & $0.536$                      & $1.632$ & $2.284$ & $0.914$ \\
		G0, near corner & $0.613$                      & $0.286$ & $2.332$ & $0.921$ \\
		G4, near corner & $0.427$                      & $0.182$ & $2.334$ & $0.922$ \\
		\bottomrule
	\end{tabular}
\end{table*}

Retaining one CPU thread team throughout the sweep raises geometric-mean
throughput by $6.9\%$.  Fusing each CUDA layer raises solve throughput by
$78.6\%$ and transfer-inclusive throughput by $20.2\%$.

The $N=1000$ results show why source position also matters.  For a centred
source, octant OpenMP outperforms both persistent-shell OpenMP and
transfer-inclusive CUDA.  Near a corner, one octant contains most of the cells,
so octant OpenMP falls below persistent-shell OpenMP, while fused CUDA remains
faster than both CPU schedules even after transfers are counted.  CPU--GPU
ordering therefore depends on source position and on whether field transfers
are included.

The $N=1000$ cases also expose the memory limit.  The transmission and
visibility fields require $8.00$~GB ($7.45$~GiB) in single precision and
nearly exhaust the laptop GPU's available memory.  The CPU validation process
peaks at $11.18$~GiB because it retains a second output field for comparison.
The second output field is not required by the full-field algorithm.  As
established in Section~\ref{sec:implementations}, the algorithm requires storage
linear in the number of grid cells.

Taken together, the geometry results show lower residual false-negative rates
at $N=256$ than at $N=32$ and sharper aperture boundaries under refinement,
but do not establish exact agreement with continuous visibility at any tested
resolution, a monotone error trend for every threshold, or a preferred
threshold.  The performance results show that the four-parent
implementation achieves the highest throughput at every tested resolution in
the G4 CPU comparison, while method rules, execution schedule, source position,
and field transfers all affect computational cost.  Combined with the
finite-path analysis and nonbinary
consequence in Sections~\ref{sec:exact-2d-path-model}--
\ref{sec:multidimensional-extension}, the geometry and performance evidence
complete the paper's theory--algorithm--implementation chain.

\section{Scope, Limitations, and Conclusion}
\label{sec:conclusion}

This paper establishes that the dominant-axis visibility recurrence is a
finite-path model of scalar transmission rather than merely a local
interpolation rule.
For every source and target, the recurrence computes the initial visibility
value $V_s$ times the uniform average of the transmission products over a finite
family of canonical paths.  The endpoint-conditioned hypergeometric bridge gives the
probability that a uniformly selected path passes through each cell on an
intermediate 2D layer.  In 3D, the two transverse-coordinate bridge laws form
the 3D product bridge, whose target-entry probabilities are the four bilinear
weights.  Assigning one transverse-advance schedule to every transverse
coordinate yields the multilinear recurrence in any fixed spatial dimension.
The finite-path model therefore identifies both what
the scalar visibility value measures and why the local predecessor weights have
their stated form.

The finite-path average defines the value that
Algorithm~\ref{alg:causal-full-field} must return at each cell and that the
NumPy, OpenMP, and CUDA implementations must reproduce up to floating-point
roundoff.  Exhaustive
enumeration checks the recurrence independently on cases with tractable path
counts, while the three implementations agree numerically on the returned fields.
Together, exhaustive path enumeration and full-volume numerical agreement
establish that the implementations preserve the dominant-axis recurrence.
Section~\ref{sec:validation} then compares the four methods' thresholded
visibility classifications with continuous visibility on successively refined
grids and separately reports runtime, throughput, and memory use on the
benchmark laptop.

The recurrence is exact for the declared canonical path family, but finite-path
exactness does not imply equality with continuous visibility at a fixed grid
resolution.  A fixed source--target query is nongrazing here when its segment
either stays a positive distance from every obstacle or enters an obstacle
interior.  For any such query and any fixed threshold
strictly between zero and one,
Theorem~\ref{thm:nongrazing-refinement-consistency} proves that the thresholded
recurrence classification agrees
with continuous visibility on every sufficiently fine grid, provided that every
obstacle-intersecting cell is blocked, the grid source and target cells remain
free, and no additional decay is applied.  Boundary contact without an
interior intersection remains dependent on the sampling and boundary
conventions.  The grid-refinement results likewise support neither a preferred
threshold nor a convergence rate uniform over all queries.

Two further limits concern probability and transmission.  The finite-path
model defines probability through uniform selection within the canonical path
family, not through uncertainty in the map.  When the transmission field
instead assigns each cell its individual probability of being free under a
random-map model, multiplying those probabilities along a path represents
expected path transmission only when the probability that all cells on the
path are free equals the product of their individual free probabilities, as
stated in Section~\ref{sec:recurrence}.  For deterministic
transmissions, averaging
complete path products accounts for cross-layer joint incidence, but the
prescribed attenuation remains a scalar model and does not represent
scattering, refraction, wavelength, or multiple-bounce transport.

The paper's contribution is the connected theory--algorithm--\hspace{0pt}implementation
chain, not an isolated interpolation stencil or causal sweep.  The
source-relative derivation determines the local visibility update, the
path-average identities prove that each scalar value equals the finite-path
average for the corresponding source-to-cell canonical path family, and
Algorithm~\ref{alg:causal-full-field} evaluates that quantity throughout the
grid.  The dominant-axis visibility field is therefore more than an efficient
collection of fractional outputs: the finite-path average gives every cell
value a precise mathematical meaning that is preserved by the full-field
algorithm and its NumPy, OpenMP, and CUDA implementations.

\section*{Acknowledgments}

The authors thank Renaud Detry, Ömür Arslan, and Patrick Vandewalle for
questions arising from Ibrahim Ibrahim's preliminary doctoral defence that
prompted him to revisit his earlier notes on path interpretations of
two-dimensional visibility and develop the probabilistic and multidimensional
formulation presented in this article.

\section*{Declarations}

\paragraph{Funding}
Part of this research was carried out within the framework of Flanders Make's
SBO project ARENA (Agile \& REliable NAvigation).

\paragraph{Competing interests}
The authors have no relevant financial or non-financial interests to disclose.

\paragraph{Data availability}
No external datasets were used. The numerical results reported in this article
can be regenerated using the implementations and scripts identified below.

\paragraph{Code availability}
The implementations and scripts used to reproduce the numerical results are
publicly available at\\
\href{https://github.com/IbrahimSquared/hypergeometric-grid-visibility}
{\nolinkurl{https://github.com/IbrahimSquared/}}\\
\href{https://github.com/IbrahimSquared/hypergeometric-grid-visibility}
{\nolinkurl{hypergeometric-grid-visibility}}.

\paragraph{Author contributions}
Ibrahim Ibrahim developed the mathematical formulation and software, conducted
the validation and performance studies, prepared the figures, and wrote the
original draft. Wilm Decré supervised the research and critically reviewed the
methodology, results, and manuscript. Both authors contributed to the study
conception and approved the final manuscript.

\paragraph{Use of generative AI}
OpenAI Codex was used as an interactive assistant for software development,
Python scripts for numerical analysis and data visualization, checks of
mathematical derivations, theorem statements, and proofs, and manuscript
development and consistency review. The authors reviewed and verified all
resulting mathematical statements, software, numerical results, visualizations,
and text and take full responsibility for the content of this article.

\bibliographystyle{elsarticle-num}
\bibliography{references}

\end{document}